\def\qu#1 {\fbox {\footnote {\ }}\ \footnotetext { From Qu: {\color{red}#1}}}
\def\kq#1 {\fbox {\footnote {\ }}\ \footnotetext { From Quan: {\color{blue}#1}}}
\def\tl#1 {\fbox {\footnote {\ }}\ \footnotetext { From Niu: {\color{blue}#1}}}
\def\wang#1 {\fbox {\footnote {\ }}\ \footnotetext { From Wang: {\color{purple}#1}}}
\newcommand{\mtl}[1]{{{\color{blue}#1}}}
\newtheorem{Th}{Theorem}[section]
\newtheorem{Cor}[Th]{Corollary}
\newtheorem{Prop}[Th]{Proposition}
\newtheorem{Lem}[Th]{Lemma}
\newtheorem{Def}[Th]{Definition}
\newtheorem{Exa}[Th]{Example}
\newcommand{\tr}{{\rm Tr}}
\newcommand{\gf}{{\mathbb F}}
\newcommand{\figcaption}{\def\@captype{figure}\caption}
\newcommand{\tabcaption}{\def\@captype{table}\caption}
\begin{document}

	\title{More constructions of $n$-cycle permutations}

\author{
		{Tailin Niu, Kangquan Li, Longjiang Qu and Bing Sun}
		\thanks{
			This work is supported
			in part by the National Natural Science Foundation of China (NSFC) under Grant 62032009 and Grant 62172427,
			in part by the State Key Development Program for Basic Research of China under Grant 2019-JCJQ-ZD-351-00,
			in part by the Natural Science Foundation of Hunan Province of China under Grant 2021JJ40701,
			and in part by the Research Fund of National University of Defense Technology under Grant ZK22-14 and Grant ZK20-42.	\emph{(Corresponding author: Longjiang Qu.)}
			
			The authors are with the College of Science,
			National University of Defense Technology, Changsha, 410073, China (e-mail: runningniu@outlook.com; likangquan11@nudt.edu.cn; ljqu\_happy@hotmail.com; happy\_come@163.com).
			They are also with Hunan Engineering Research Center of Commercial Cryptography Theory and Technology Innovation, Changsha 410073, China.
			
				  }
		}

	\maketitle{}
	\begin{abstract}
		$n$-cycle permutations with small $n$ have the advantage that their compositional inverses are efficient in terms of implementation.
		They can be also used in constructing Bent functions and designing codes.
		Since the AGW Criterion was proposed, the permuting property of several forms of polynomials has been studied.
		In this paper, characterizations of several types of $n$-cycle permutations are investigated.
		Three criteria for $ n $-cycle permutations of the form $xh(\lambda(x))$, $ h(\psi(x)) \varphi(x)+g(\psi(x)) $ and  $g\left(  x^{q^i} -x +\delta   \right) +bx $ with general $n$ are provided.
		We demonstrate these criteria by providing explicit constructions.
		For the form of $x^rh(x^s)$, several new explicit triple-cycle permutations are also provided.
		Finally, we also consider  triple-cycle permutations of the form $x^t + c\tr_{q^m/q}(x^s)$ and provide one explicit construction.
		Many of our constructions are both new in the $n$-cycle property and the permutation property.
		
	\end{abstract}
	
		\begin{IEEEkeywords}
				Finite Field, Permutation Polynomial, the AGW Criterion, $n$-cycle Permutation.
			\end{IEEEkeywords}

%
	
	\section{Introduction}

		
		%
		%
		%

		%
		

	Let $ q $ be a prime power and $\gf_q$ be the finite field with $q$ elements.
	A polynomial $f(x) \in\gf_q[x]$ is called a \textit{permutation polynomial} (PP) and $f^{-1}$ denotes the compositional inverse of $ f $, if the map $f: a \mapsto f(a) $ is a bijection on $ \gf_q $.
	If there exists a positive integer $ n $ such that $ f^{(n)}=I $ is the identity map, $ f $ is called an \textit{n-cycle permutation}, where the $n$-th functional power of $ f $ is defined inductively by $ f^{(n)} = f \circ f^{(n-1)} = f^{(n-1)} \circ f $ and $ f^{(1)}=f, f^{(0)}=I ,f^{(-n)}=(f^{-1})^{(n)} $ with our notation.
	In this paper, $ n $-cycle permutations are called \textit{low-cycle permutations} for a small $ n $.
	When $n=2$ or $3$, $f$ is also called an \textit{involution}, or a \textit{triple-cycle permutation} respectively.

	Permutation polynomials over finite fields have wide applications in coding theory, cryptography, and combinatorial design theory, and we refer the readers to \cite{MullenWang14, hou2015permutation, li2018survey,Wang2019index,bartoli2021hasse,anbar2018carlitz} and the references therein for more details of the recent advances and contributions to the area.
	It is a challenging task to find new classes of permutation polynomials.
	However in 2011, Akbary et al. \cite{akbary2011constructing} provided a powerful method for constructing PPs over finite fields, which is called the AGW Criterion now.
	It both provided a unified explanation of earlier constructions of PPs and served a method to construct many new classes of PPs.
	After then, permutation polynomials of the form $x^{r}h(x^{s}) $ over $\gf_{q}$ were constructed by some researchers; see \cite{ding2015permutation,li2017new,gupta2016some,li2017several,zha2017further,li2018newp,wu2017permutation,xu2018some,tu2018two,li2017two,zhengTwoClassesPermutation2021a,liSeveralClassesComplete2021,liNewPermutationTrinomials2019b,houDeterminationClassPermutation2020,bartoliFamilyPermutationTrinomials2021a} etc. for more details. 
	Similarly, PPs of the form $xh(\lambda(x))$, $ h(\psi(x)) \varphi(x)+g(\psi(x)) $ and  $g\left(  x^{q^i} -x +\delta   \right) +bx $ were studied in \cite{akbary2011constructing,zheng2019two,zha2016some,tuxanidy2014inverses,yuan2007four} etc. 
	In addition, several authors researched constructions of the PPs $x+ c\tr_{q^{m} / q}(x^s)$; see e.g. \cite{kyureghyan2016permutation,charpin2010monomial,li2018permutation,gerikePermutationsFiniteFields2020,ZHA2019101573} for more details.
	In engineering, if both the permutation and its compositional inverse are efficient in terms of implementation, it is beneficial for the designer.
	This motivates the use of low-cycle permutations in the S-box of block ciphers.
	That the implementation of its inverse does not require much resources is a direct practical advantage of a low-cycle permutation.
	In devices with limited resources as a part of a block cipher, this is very useful.
	For instance, involutions have been used frequently in block cipher designs, in AES \cite{daemen2013design}, Khazad \cite{barreto2000khazad}, Anubis \cite{barreto2000anubis} and PRINCE \cite{borghoff2012prince}.
	Furthermore, low-cycle permutations (such as involutions) have been also used to construct Bent functions over finite fields \cite{coulter2018bent,gallager1962low} and to design codes \cite{gallager1962low}.
	In \cite{canteaut2015behaviors}, behaviors of permutations of an affine equivalent class have been analyzed with respect to some cryptanalytic attacks, and it is shown that low-cycle permutations (such as involutions)  are nice candidates against these attacks.
	In addition, in classifying permutations in the view of cycle, the research of $ n $-cycle permutations will be quite helpful, since each permutation over finite sets must be an $ n $-cycle permutation for at least one positive integer $n$.
	Because of the importance of $ n $-cycle permutations, in recent years, there are several studies about them.
	Charpin et al. \cite{charpin2016involutions} started the explicit study of involutions for finite fields with even characteristic. 
	Since then, a lot of attentions had been drawn in this direction.
	In 2019, a more concise criterion for involutory permutations of the form $x^rh(x^s)$ over $\gf_q$ was given by Zheng et al. \cite{zheng2019constructions}, where $s\mid {(q-1)}$.
	By using this criterion, from a cyclotomic perspective, they proposed a general method to construct involutions of such form from given involutions over some subgroups of $\mathbb{F}_{q}^{*}$ by solving congruent and linear equations over finite fields.
	Niu et al. \cite{niu2019new} started from the AGW Criterion, and proposed an involutory version of the AGW Criterion, independently.
	They also provided several explicit involutions of the forms $ x^rh(x^s) $ and $ g\left(x^{q^i} - x+\delta\right) +cx  $.	
	Monomial, Dickson polynomial and Linearized triple-cycle permutations over binary fields were studied by \cite{liuTripleCyclePermutationsFinite2019}.
	In 2020, Wu et al. \cite{wuCharacterizationsConstructionsTriplecycle2020a} generalized the work of \cite{zheng2019constructions} and obtained some characterizations of triple-cycle permutations of the form $ x^r h(x^s ) $.
	After that, Chen et al. \cite{chenConstructionsNcyclePermutations2021} generalized the work of \cite{wuCharacterizationsConstructionsTriplecycle2020a,zheng2019constructions,niu2019new,zheng2019constructions} and obtained criteria for $ n $-cycle permutations, which mainly are of the form $ x^rh(x^s) $.
	Chen et al. \cite{chenConstructionsNcyclePermutations2021} also proposed other constructing tools and several explicit triple-cycle permutations of the form $ x^rh(x^s) $ from both usual perspective and cyclotomic perspective.

	There are a lot of researches about the permutation property of several forms of polynomials $x^rh(x^s)$, $xh(\lambda(x))$, $ h(\psi(x)) \varphi(x)+g(\psi(x)) $, $g\left(  x^{q^i} -x +\delta   \right) +bx $ and  $x+ c\tr_{q^{m} / q}(x^s)$. 
	However, $n$-cycle permutation of the form $x^rh(x^s)$ have not been well investigated so far, and there are few studies about $n$-cycle property of other forms.
	Furthermore, explicit constructions of $n$-cycle permutations for general $n$ and $n=3$ is rarely found.
	New constructions that both new in $n$-cycle property and permutation property can also be obtained by researching $n$-cycle permutations.
	These motivate us to consider the characterizations of $n$-cycle property for several forms and to provide several new explicit constructions.
	The main purpose of this paper is to investigate general criteria for $ n $-cycle permutations of several forms over finite fields, and provides a way to acquire  cycle permutations from constructing non-identity mappings over subsets of finite fields.
	First, motivated by the AGW Criterion, we propose three criteria for $ n $-cycle permutations of the form $xh(\lambda(x))$, $ h(\psi(x)) \varphi(x)+g(\psi(x)) $ and  $g\left(  x^{q^i} -x +\delta   \right) +bx $ with general $n$.
	We also demonstrate these criteria by constructing explicit $ n $-cycle permutations with general $n$ and $n=3$.
	Then, for $x^rh(x^s)$ over $\gf_q$, we provide several explicit triple-cycle permutations by considering different $g(x)=x^rh(x)^s$ over $\mu_\ell=\left\{    x\in{\gf}_{q}^*  \    |  \   x^\ell=1   \right\}$, where $  \ell = {(q-1)/s} $.
	Finally, we consider triple-cycle permutations of the form $x^t + c\tr_{q^m/q}(x^s)$ and provide one construction.
	Many of explicit constructions in this paper are both new in $n$-cycle property and permutation property, especially those in Section \ref{fenyuan}.

	The rest of this paper is organized as follows.
	In Section \ref{pre},  we introduce some basic knowledge about $ n $-cycle permutations.
	Criteria for $ n $-cycle permutations of the form $xh(\lambda(x))$, $ h(\psi(x)) \varphi(x)+g(\psi(x)) $ and  $g\left(  x^{q^i} -x +\delta   \right) +bx $ are proposed in Section \ref{other}.
	Triple-cycle permutations of the form $x^rh(x^s)$ are constructed in Section \ref{fenyuan}.
	Furthermore, we provide an explicit triple-cycle permutations of the form $x^t + c\tr_{q^m/q}(x^s)$.

	\section{Preliminaries}\label{pre}
	
In this section, we introduce some basic knowledge.
	
	\begin{Def}
		If there exists a positive integer such that $ f^{(n)}=I $, we call $ f $ an \textit{$n$-cycle permutation}.
	\end{Def}

Monomial $n$-cycle permutations by Lemma \ref{monomial} will be basic components in obtaining some constructions.
	\begin{Lem}\label{monomial}
		Let $f(x)=x^{d}$ be a monomial polynomial over $\mathbb{F}_{q}$. 
		Then $f$ is $n$-cycle over $\gf_{q}$ if and only if $d^n \equiv 1 \pmod{q-1}$.
	\end{Lem}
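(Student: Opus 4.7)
The plan is to reduce the $n$-cycle condition for the monomial $f(x)=x^d$ to a congruence on exponents by exploiting the cyclic structure of $\mathbb{F}_q^*$. First, I would observe by induction on $n$ that $f^{(n)}(x) = x^{d^n}$ for every $x \in \mathbb{F}_q$; the base case $n=1$ is the definition, and the inductive step uses $f^{(n)}(x) = f\bigl(f^{(n-1)}(x)\bigr) = (x^{d^{n-1}})^d = x^{d^n}$. Consequently, $f$ is an $n$-cycle permutation if and only if $x^{d^n} = x$ for every $x \in \mathbb{F}_q$.

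Next, I would separate the evaluation at $x=0$, where the identity $0^{d^n}=0$ holds trivially, from the evaluation on $\mathbb{F}_q^*$. For $x \neq 0$, the condition $x^{d^n}=x$ is equivalent to $x^{d^n-1}=1$. Since $\mathbb{F}_q^*$ is a cyclic group of order $q-1$, this holds simultaneously for every $x \in \mathbb{F}_q^*$ if and only if $(q-1) \mid (d^n-1)$, i.e., $d^n \equiv 1 \pmod{q-1}$. This establishes both implications in a single step.

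As a sanity check I would note that the condition $d^n \equiv 1 \pmod{q-1}$ already forces $\gcd(d, q-1)=1$, because $d \cdot d^{n-1} \equiv 1 \pmod{q-1}$ exhibits an inverse for $d$ modulo $q-1$; hence $f(x)=x^d$ is automatically a permutation of $\mathbb{F}_q$ under this assumption, which keeps the statement internally consistent with the definition of an $n$-cycle permutation.

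There is no real obstacle here; the only subtlety is to make explicit that the congruence on exponents captures the behaviour on all of $\mathbb{F}_q^*$ uniformly via cyclicity, and that the $x=0$ case is handled separately. The argument is short and does not require any result beyond the definition of $f^{(n)}$ and the cyclicity of the multiplicative group, so I would present it in a few lines.
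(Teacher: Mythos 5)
Your proof is correct. The paper states this lemma without any proof at all, treating it as a basic known fact, so there is nothing to compare against; your argument --- induction giving $f^{(n)}(x)=x^{d^n}$, the separate treatment of $x=0$, and the reduction via cyclicity of $\mathbb{F}_q^*$ to $(q-1)\mid(d^n-1)$ --- is exactly the standard justification one would supply, and your observation that the congruence automatically forces $\gcd(d,q-1)=1$ (so $f$ is indeed a permutation) is a nice touch that keeps the statement consistent with the definition of an $n$-cycle permutation.
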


	\begin{Lem}\label{xianxingkuochong}
		Assume $f(x) \in \gf_q[x]$ is an $n$-cycle permutation over $ \gf_{q^m}$.
		 If $m \mid ni$,  then $g(x)=f(x)^{q^i}$ is also an $n$-cycle permutation over $ \gf_{q^m}$.
	\end{Lem}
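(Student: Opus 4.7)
The plan is to decompose $g$ as the composition $g = \phi \circ f$, where $\phi(x) = x^{q^i}$ denotes the Frobenius-type map, and then exploit the fact that $\phi$ and $f$ commute under composition because $f$ has coefficients in $\gf_q$.

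First I would verify that $g$ is a permutation of $\gf_{q^m}$: since $\phi(x)=x^{q^i}$ is always a bijection on any finite field of characteristic $p \mid q$, and $f$ is assumed to be a permutation of $\gf_{q^m}$, the composition $g=\phi\circ f$ is a permutation as well.

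Second, I would establish the key commutativity $\phi\circ f = f \circ \phi$. Writing $f(x)=\sum_j a_j x^j$ with $a_j\in\gf_q$, one has $a_j^{q^i}=a_j$, so
\[
\phi(f(x)) = \Bigl(\sum_j a_j x^j\Bigr)^{q^i} = \sum_j a_j x^{j q^i} = f(x^{q^i}) = f(\phi(x)).
\]
Because $\phi$ and $f$ commute, the $n$-th iterate satisfies
\[
g^{(n)} = (\phi\circ f)^{(n)} = \phi^{(n)}\circ f^{(n)} = \phi^{(n)}\circ I = \phi^{(n)},
\]
and $\phi^{(n)}(x)=x^{q^{ni}}$.

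Finally, I would use the hypothesis $m\mid ni$ to conclude. For every $x\in\gf_{q^m}$ we have $x^{q^m}=x$, and iterating shows $x^{q^{km}}=x$ for all $k\ge 0$. Writing $ni=km$ for some integer $k$ gives $g^{(n)}(x)=x^{q^{ni}}=x$, so $g^{(n)}=I$ on $\gf_{q^m}$, as desired. The only subtle point is the commutativity step; once that is in place, the rest is immediate. I do not anticipate any genuine obstacle.
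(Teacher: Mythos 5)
Your proof is correct and takes essentially the same approach as the paper: both rest on the identity $f(x)^{q^i}=f(x^{q^i})$ (valid since the coefficients of $f$ lie in $\gf_q$), which reduces $g^{(n)}(x)$ to $x^{q^{ni}}$, after which $m\mid ni$ gives $x^{q^{ni}}=x$ on $\gf_{q^m}$. Your explicit factorization $g=\phi\circ f$ and the step $(\phi\circ f)^{(n)}=\phi^{(n)}\circ f^{(n)}$ simply spell out what the paper's one-line computation $g^{(n)}(x)=f^{(n)}(x^{q^{ni}})=x^{q^{ni}}$ does implicitly.
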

	\begin{proof}
		We have $f^{(n)}(x)=x$ and $f(x)^{q^i}=f(x^{q^i})$.
		Clearly, $g^{(n)}(x)=f^{(n)}(x^{q^{ni}} )=x^{q^{ni}}$.
		Thus, $g$ is also $n$-cycle.
	\end{proof}


	Let $m$ be a positive integer.
	We use $\operatorname{Tr}_{q^m/q}(\cdot)$ to denote the trace function form $\mathbb{F}_{q^{m}}$ to $\mathbb{F}_{q}$, i.e.,
	$$	\operatorname{Tr}_{q^m/q}(x)=x+x^{q}+x^{q^2}+\cdots+x^{q^{m-1}} .	$$
	We use $A^*$ to denote a set containing all nonzero elements of a set $ A $.
	The cardinality of a set $A$ is denoted by $ |  A  |$.
	For a mapping $f$, the kernel of $f$ is denoted by $\ker(f)$.

	Let  $F : \gf_{p^m} \rightarrow \gf_{p^m}  $  be a mapping, and $\omega$ be a $p$-th primitive unit root, where $p$ is a prime and $ m $ is a positive integer.
The Walsh transform of $F$ at $(u, v) \in \mathbb{F}_{p^{m}} \times \mathbb{F}_{p^{m}}$ equals by definition the Walsh transform of the so-called component function $\tr_{p^{n} / p}(v F(x))$ at $u$, that is:
$$
W_{F}(u, v):=\sum_{x \in \mathbb{F}_{p^{m}}}\omega^{\tr_{p^{m}/ p} (v F(x))+\tr_{p^{m}/ p}(u x) }.
$$
We have a proposition for involutions by the Walsh transform.
\begin{Prop}
	Assume $F $ permutes $ \gf_{p^m} $.
	Then, $F$ is an involution if an only if $ W_{F}(u, v)=W_{F}(v, u)	$ for each $(u, v) \in \mathbb{F}_{p^{m}} \times \mathbb{F}_{p^{m}}$.
\end{Prop}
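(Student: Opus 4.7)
The plan is to prove the equivalence by manipulating the Walsh transform under a change of variable, reducing the stated symmetry to the condition $F = F^{-1}$.

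\textbf{Step 1: Rewrite $W_F(v,u)$ via $F^{-1}$.} Since $F$ permutes $\gf_{p^m}$, I would perform the substitution $x = F^{-1}(y)$ in the defining sum
\[
W_F(v,u) = \sum_{x \in \gf_{p^m}} \omega^{\tr_{p^m/p}(u F(x)) + \tr_{p^m/p}(v x)},
\]
which yields
\[
W_F(v,u) = \sum_{y \in \gf_{p^m}} \omega^{\tr_{p^m/p}(u y) + \tr_{p^m/p}(v F^{-1}(y))} = W_{F^{-1}}(u,v).
\]
So the hypothesis $W_F(u,v) = W_F(v,u)$ for every $(u,v)$ is equivalent to $W_F(u,v) = W_{F^{-1}}(u,v)$ for every $(u,v)$.

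\textbf{Step 2: Walsh transform determines the permutation.} I would then invoke the standard fact that a function $\gf_{p^m} \to \gf_{p^m}$ is uniquely determined by its Walsh coefficients. Concretely, for fixed $v$, the family $\{W_F(u,v)\}_{u \in \gf_{p^m}}$ is the Fourier transform (with respect to the additive group) of $x \mapsto \omega^{\tr_{p^m/p}(v F(x))}$, so knowing all Walsh values determines $\tr_{p^m/p}(v F(x))$ for every $v$ and every $x$, and non-degeneracy of the trace form then forces $F(x)$ itself. Thus $W_F = W_{F^{-1}}$ as bivariate functions if and only if $F = F^{-1}$, i.e.\ $F$ is an involution.

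\textbf{Step 3: Assemble both directions.} Combining Steps 1 and 2 gives the ``only if'' direction: symmetry $W_F(u,v) = W_F(v,u)$ forces $F = F^{-1}$. The ``if'' direction is immediate: if $F$ is an involution, then $F^{-1} = F$, so $W_F(v,u) = W_{F^{-1}}(u,v) = W_F(u,v)$ by Step 1.

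The one step that requires some care rather than a formal manipulation is the uniqueness argument in Step 2; I would write it out by multiplying $W_F(u,v) = W_{F^{-1}}(u,v)$ by $\omega^{-\tr_{p^m/p}(u x_0)}$ and summing over $u$, producing $p^m \omega^{\tr_{p^m/p}(v F(x_0))} = p^m \omega^{\tr_{p^m/p}(v F^{-1}(x_0))}$ for every $v$ and $x_0$, from which $\tr_{p^m/p}\bigl(v(F(x_0) - F^{-1}(x_0))\bigr) = 0$ for all $v$, hence $F(x_0) = F^{-1}(x_0)$. Everything else is a routine change of variable.
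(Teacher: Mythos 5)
Your proposal is correct and takes essentially the same route as the paper: both proofs rest on the change-of-variable identity $W_{F}(v,u)=W_{F^{-1}}(u,v)$ (valid because $F$ permutes the field), used in one direction with $F^{-1}=F$ and in the other to reduce the symmetry hypothesis to $W_{F^{-1}}=W_{F}$. The only difference is that the paper stops at $W_{F^{-1}}(v,u)=W_{F}(v,u)$ and simply asserts that $F$ is an involution, whereas you spell out the Fourier-inversion step showing the Walsh transform determines the function---a point the paper leaves implicit (also, your ``if''/``only if'' labels in Step 3 are swapped relative to the statement, but that is purely cosmetic).
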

\begin{proof}
	Assume $F$ is an involution. Then, we have 
	$$ 	W_{F}(u, v)=\sum_{x \in \mathbb{F}_{p^{m}}}\omega^{\tr_{p^{m}/ p} (v F(x))+\tr_{p^{m}/ p}(u x) } =\sum_{x \in \mathbb{F}_{p^{m}}}\omega^{\tr_{p^{m}/ p} (v F(F(x)))+\tr_{p^{m}/ p}(u F(x)) } =W_{F}(v, u) . 	$$
	Conversely, we assume $ W_{F}(u, v)=W_{F}(v, u)	$.
	Since  $$ 	W_{F^{-1}}(v,u)=\sum_{x \in \mathbb{F}_{p^{m}}}\omega^{\tr_{p^{m}/ p} (u F^{-1}(x))+\tr_{p^{m}/ p}(v x) } =\sum_{x \in \mathbb{F}_{p^{m}}}\omega^{\tr_{p^{m}/ p} (u x)+\tr_{p^{m}/ p}(v F(x)) } =W_{F}(u, v)   ,	$$
	we arrive at $  W_{F^{-1}}(v,u)=  W_{F}(v,u) $, for each $(u, v) \in \mathbb{F}_{p^{m}} \times \mathbb{F}_{p^{m}}$.
	Thus, $F$ is an involution.
\end{proof}
	
	When we construct cycle permutations, the following result is inspiring and useful.
	\begin{Lem}
		\label{LGWlemma}
		(\cite{akbary2011constructing}, AGW Criterion)
		Let $A, S$, and $\overline{S}$ be finite sets with $\# S=\# \overline{S}$, and let $f: A\to A,$ $g: S\to \overline{S}$, $\lambda: A\to S$ and $\overline{\lambda}: A\to\overline{S}$ be maps such that $\bar{\lambda}\circ f=g\circ \lambda$. 
		If both $\lambda$ and $\bar{\lambda}$ are surjective, then the following statements are equivalent:
		\begin{enumerate}[(1)]
			\item $f$ is a bijection and
			\item $g$ is a bijection from $S$ to $\overline{S}$ and $f$ is injective on $\lambda^{-1}(s)$ for each $s\in S$.
		\end{enumerate}
	\end{Lem}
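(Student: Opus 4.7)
The plan is to establish the two implications of the equivalence separately, exploiting the commuting square $\bar{\lambda} \circ f = g \circ \lambda$ together with the surjectivity of $\lambda$ and $\bar{\lambda}$ and the finiteness of the sets involved. The argument in both directions is a short diagram chase, so I would begin by drawing the square with $A$ on top, $f$ as the top arrow, $S, \bar{S}$ on the bottom, $g$ as the bottom arrow, and $\lambda, \bar{\lambda}$ as the vertical maps; every step below is just a translation of this picture into set-theoretic language.

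For the direction (1) $\Rightarrow$ (2), I would first note that injectivity of $f$ on all of $A$ trivially gives injectivity of $f$ restricted to each fiber $\lambda^{-1}(s)$. To show $g$ is a bijection, I would pick an arbitrary $\bar{s} \in \bar{S}$, use surjectivity of $\bar{\lambda}$ to write $\bar{s} = \bar{\lambda}(a)$ for some $a \in A$, use surjectivity of $f$ to write $a = f(a')$, and then apply the commuting relation to get $\bar{s} = \bar{\lambda}(f(a')) = g(\lambda(a'))$. Hence $g$ is surjective from $S$ to $\bar{S}$; since $\#S = \#\bar{S} < \infty$, surjectivity forces bijectivity.

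For the direction (2) $\Rightarrow$ (1), since $A$ is finite it suffices to prove that $f$ is injective. Suppose $f(a_1) = f(a_2)$. Applying $\bar{\lambda}$ and using $\bar{\lambda} \circ f = g \circ \lambda$ yields $g(\lambda(a_1)) = g(\lambda(a_2))$. Injectivity of $g$ then forces $\lambda(a_1) = \lambda(a_2) =: s$, so that $a_1, a_2 \in \lambda^{-1}(s)$. The hypothesis that $f$ is injective on this fiber now gives $a_1 = a_2$, as required.

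The whole argument is elementary, and I do not expect any hard step; the only mild subtlety is bookkeeping the finiteness assumptions in the right places. Concretely, $\#S = \#\bar{S} < \infty$ is what upgrades the surjectivity of $g$ to a bijection in the first direction, while the finiteness of $A$ is what upgrades the injectivity of $f$ to a bijection in the second direction. Without those two finite-set upgrades, neither implication would go through, so I would be careful to flag these points in the write-up.
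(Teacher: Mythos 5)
Your proof is correct. Note that the paper itself gives no proof of this lemma --- it is quoted verbatim from Akbary--Ghioca--Wang \cite{akbary2011constructing} --- and your diagram chase is essentially the standard argument from that source: fiber injectivity plus injectivity of $g$ pins down injectivity of $f$, and the two finiteness upgrades (a surjection between finite sets of equal cardinality is a bijection; an injection of a finite set into itself is a bijection) do exactly the work you flag. Your closing observation is also accurate: both implications genuinely fail for infinite sets, and in fact your argument never uses surjectivity of $\lambda$, only that of $\bar{\lambda}$, so the lemma as stated carries a mildly redundant hypothesis.
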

	The AGW Criterion can be illustrated as the following commutative diagram:
	\begin{equation*}
		\xymatrix{
			A \ar[rr]^{f}\ar[d]_{\lambda} &   &  A  \ar[d]^{\overline{\lambda}} \\
			S	 \ar[rr]^{g} &  & \overline{S} }
	\end{equation*}
	Since the AGW Criterion was put forward, a lot of families of PPs were constructed by it.
In this paper,	a permutation polynomial is called an \textit{AGW-PP} if it is based on the AGW Criterion.


	%
	%

	\section{$n$-cycle permutations of three types of AGW-PPs}
	\label{other}

	In this section, we present three constructions of $n$-cycle permutations of the form $xh(\lambda(x))$, $ h(\psi(x)) \varphi(x)+g(\psi(x)) $ and  $g\left(  x^{q^i} -x +\delta   \right) +bx $.
	Most of them are  new in $n$-cycle property.
	
	\subsection{$n$-cycle permutations of the form $xh(\lambda(x))$}
	In \cite[Theorem 6.3]{akbary2011constructing}, Akbary et al. studied the permutation property of $xh(\lambda(x))$. 
	In this subsection, we consider the $n$-cycle property of permutations with the form $xh(\lambda(x))$, and several constructions are provided.
	
	\begin{Th}
		\label{snsxjdchsn}
		Let $ q $ be any power of a prime number $ p $, $m $ be any positive integer, and $ S $ be any subset of $ \gf_{q^m} $ containing $0$.
		Let $ h,k \in \gf_{q^m}$ be any polynomials such that $ h(0) \ne 0 $, $ k(0) = 0 $ and $ g(x) = xk(h(x)) $ permuting  $ \lambda(\gf_{q^m}) $.
		Let $ \lambda(x) \in \gf_{q^m}[x]  $ be any polynomial satisfying
		\begin{enumerate}[(1)]
			\item $h(\lambda(\gf_{q^m})) \subseteq S$; and
			\item $ \lambda(a\alpha) = k(a) \lambda(\alpha) $ for all $ a \in S $ and all $ \alpha \in \gf_{q^m}$.
		\end{enumerate}
		Then the polynomial $ f (x) = xh(\lambda(x)) $ is an $n$-cycle permutation if and only if
		\begin{equation}
			\label{asdasfasdf}
			\prod_{i=0}^{n-1} 	h\left( g^{(i)}(  y  )    \right)=1
		\end{equation}	
		holds for any $y \ne 0 \in  \lambda(\gf_{q^m}) $.
		Furthermore,
		\begin{equation}
			\label{kjbkybiub}
			\prod_{i=0   }^{n-1} k\left(   h\left(   g^{(i)}(  y  )  \right)  \right)=1
		\end{equation}	
		is necessary for $ f  $ being an $n$-cycle permutation.
	\end{Th}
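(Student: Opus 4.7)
\medskip

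The plan is to exploit the AGW-style semiconjugacy between $f$ and $g$, then compute $f^{(n)}$ by a telescoping product.

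First I would verify that the hypotheses force the intertwining $\lambda\circ f=g\circ\lambda$. Given any $x\in\gf_{q^m}$, condition (1) guarantees $h(\lambda(x))\in S$, so condition (2) with $a=h(\lambda(x))$ and $\alpha=x$ yields
\begin{equation*}
\lambda(f(x))=\lambda\bigl(x\,h(\lambda(x))\bigr)=k\bigl(h(\lambda(x))\bigr)\,\lambda(x)=g(\lambda(x)).
\end{equation*}
An easy induction then gives $\lambda(f^{(i)}(x))=g^{(i)}(\lambda(x))$ for every $i\ge 0$. This is the analogue of the AGW commutative diagram at the level of iterates and is the only place the structural hypotheses on $\lambda$, $h$, $k$ are used.

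Next I would unfold $f^{(n)}$ as a product. Writing $f(u)=u\,h(\lambda(u))$ and iterating,
\begin{equation*}
f^{(n)}(x)=x\prod_{i=0}^{n-1}h\bigl(\lambda(f^{(i)}(x))\bigr)=x\prod_{i=0}^{n-1}h\bigl(g^{(i)}(\lambda(x))\bigr).
\end{equation*}
Hence $f^{(n)}=I$ on $\gf_{q^m}$ is equivalent to the product being $1$ for every nonzero $x$. Setting $y=\lambda(x)$ and noting that $\lambda$ is surjective onto $\lambda(\gf_{q^m})$ converts this into the condition \eqref{asdasfasdf} for all $y\ne 0$ in $\lambda(\gf_{q^m})$. (The case $y=0$ is handled by $\lambda(0)=k(0)\lambda(\alpha)=0$ coming from condition~(2), so no extra requirement is needed at the base point.)

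Finally, for the necessary condition \eqref{kjbkybiub}, I would push the same telescoping idea through $g$. From $\lambda\circ f^{(n)}=g^{(n)}\circ\lambda$, if $f$ is $n$-cycle then $g^{(n)}$ is the identity on $\lambda(\gf_{q^m})$. Since $g(y)=y\,k(h(y))$, the same unfolding argument gives
\begin{equation*}
g^{(n)}(y)=y\prod_{i=0}^{n-1}k\bigl(h(g^{(i)}(y))\bigr),
\end{equation*}
and equating this to $y$ for each nonzero $y\in\lambda(\gf_{q^m})$ delivers \eqref{kjbkybiub}. The main obstacle I expect is purely bookkeeping: carefully justifying that $h(\lambda(x))$ always lies in $S$ so that condition~(2) may be applied at every step of the iteration, and checking the $y=0$ case so that the stated equivalence really is an equivalence rather than a one-sided implication. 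Once those boundary issues are nailed down, the proof is a direct telescoping.
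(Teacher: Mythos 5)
Your proposal follows essentially the same route as the paper's own proof: you establish the semiconjugacy $\lambda\circ f=g\circ\lambda$ from conditions (1) and (2), unfold $f^{(n)}(x)=x\prod_{i=0}^{n-1}h\bigl(g^{(i)}(\lambda(x))\bigr)$ by a telescoping induction (this is exactly the paper's Eq.~(\ref{kihbjghqbikjndq}), which the paper derives by a more laborious back-to-front expansion), and read off both conditions. Your derivation of the necessary condition (\ref{kjbkybiub}) is a minor, arguably cleaner, variant: you deduce $g^{(n)}=I$ on $\lambda(\gf_{q^m})$ from the semiconjugacy and then unfold $g$ itself, whereas the paper applies $\lambda$ to the fixed-point identity $x_0\prod_{i=0}^{n-1}h\bigl(g^{(i)}(y)\bigr)=x_0$ and peels off the factors one at a time using condition (2). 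In substance these agree.

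The one place you genuinely diverge is your parenthetical treatment of $y=0$, and that step is wrong as reasoned. Observing $\lambda(0)=k(0)\lambda(\alpha)=0$ only disposes of the point $x=0$; the real issue is a \emph{nonzero} $x$ with $\lambda(x)=0$. For such $x$, since $g(0)=0$ (because $k(0)=0$), your telescoping formula gives $f^{(n)}(x)=x\,h(0)^{n}$, and hypothesis (\ref{asdasfasdf}), assumed only for nonzero $y$, says nothing about $h(0)^{n}$. So your ``if'' direction breaks whenever $\lambda^{-1}(0)\neq\{0\}$, which is the typical situation (e.g.\ $\lambda$ a trace map). Concretely, take $q=5$, $m=2$, $\lambda=\tr_{25/5}$, $S=\gf_5$, $k(y)=y$, $h(y)=2-y^{4}$, $n=2$: then $g$ is the identity on $\gf_5$ and $h(y)^{2}=1$ for all $y\neq 0$, so (\ref{asdasfasdf}) holds for nonzero $y$, yet $f(x)=2x$ on the kernel of the trace, so $f^{(2)}(x)=4x\neq x$ there and $f$ is not an involution. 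To be fair, the paper has the identical blind spot: its ``if'' direction silently assumes (\ref{asdasfasdf}) for \emph{all} $y\in\lambda(\gf_{q^m})$, including $y=0$, which is stronger than what the theorem states. So your attempt matches the paper's proof both in method and in this defect; a watertight version must either impose (\ref{asdasfasdf}) at $y=0$ as well (equivalently, require $h(0)^{n}=1$) or assume $\lambda^{-1}(0)=\{0\}$.
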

	\begin{proof}
		Let $ 0 < s < n$.
		For  any $x\in \gf_{q^m}$ and $ y \in  \lambda(\gf_{q^m}) $ satisfying $y=\lambda(x)$, we have
		\begin{equation}\label{sdasdasefegv}
			\begin{aligned}			
				f^{(n)}(x)=&       f^{(n-1)}(   f(x)       )       \\
				=&      f^{(n-2)}\left( f(x)    h(\lambda(   f(x)      ) )  \right)   \\
				=&        f^{(n-2)}\left(xh(\lambda(x)) h(\lambda(xh(\lambda(x)) ))  \right) ,  \\ 
			\end{aligned}
		\end{equation}			
		by plugging $f(x)= xh(\lambda(x))$ into $f^{(n)}(x)$.
		After that, apply $ \lambda(   h(\lambda(x))     x) = k\left(h(\lambda(x)) \right) \lambda(x) $ into Eq. (\ref{sdasdasefegv}) and we obtain
		\begin{equation*}
			\begin{aligned}	
				f^{(n)}(x)=	 &     f^{(n-2)}\left( xh(\lambda(x))   h(       k(h(\lambda(x)))        \lambda(x )) \right)    \\
				=&      f^{(n-2)}\left( xh(   \lambda(x)    )   h(     g(   \lambda(x )  )    ) \right)    \\
				=&   	f^{(n-2)}\left( x    \prod_{i=0}^{1} 	h\left( g^{(i)}(  \lambda(x)  )    \right)     \right).
			\end{aligned}
		\end{equation*}
		So on and so forth, we arrive at the following
		\begin{equation}\label{sdahbnkjn}
			f^{(n)}(x) =f^{(n-s+1)}\left( x    \prod_{i=0}^{s-2} 	h\left( g^{(i)}(  \lambda(x)  )    \right)     \right),
		\end{equation}
		where $s$ is a positive integer.
		After plugging $f(x)= xh(\lambda(x))$ into Eq. (\ref{sdahbnkjn}),  we acquire
				\begin{equation} \label{kujbgjyqvud}
			\begin{aligned}			
					f^{(n)}(x)=&          f^{(n-s)}\left(  xh(\lambda(x))   \prod_{i=0}^{s-2} 	h\left( g^{(i)}(  \lambda(xh(\lambda(x)))  )    \right)   \right) . \\
								\end{aligned}
				\end{equation}
		Plugging $ \lambda(xh(\lambda(x))) = k(h(\lambda(x))) \lambda(x) $ and $ g(x) = xk(h(x)) $ into Eq. (\ref{kujbgjyqvud}), one can get
		\begin{equation*}
			\begin{aligned}			
				f^{(n)}(x)=&          f^{(n-s)}\left(  xh(\lambda(x))   \prod_{i=0}^{s-2} 	h\left( g^{(i)}(  \lambda(x)  k(h(\lambda(x)))     )    \right)   \right)\\
				=&          f^{(n-s)}\left(  xh(\lambda(x))   \prod_{i=0}^{s-2} 	h\left( g^{(i+1)}(    \lambda(x)  )    \right)   \right)\\
				=&          f^{(n-s)}\left(         x    \prod_{i=0}^{s-1} 	h\left( g^{(i)}(  \lambda(x)  )    \right)           \right).\\			
			\end{aligned}
		\end{equation*}
		Similarly, we finally arrive at
		\begin{equation}\label{kihbjghqbikjndq}
			\begin{aligned}
				f^{(n)}(x)			=&           f \left(         x    \prod_{i=0}^{n-2} 	h\left( g^{(i)}(  \lambda(x)  )    \right)           \right) =               x    \prod_{i=0}^{n-1} 	h\left( g^{(i)}(  \lambda(x)  )    \right)     .    \\		
			\end{aligned}
		\end{equation}
		
		On the one hand, assume for any $y\in  \lambda(\gf_{q^m}) $, $\prod_{i=0}^{n-1} 	h\left( g^{(i)}(  y  )    \right)=1 .$
		Then, according to Eq. (\ref{kihbjghqbikjndq}), $ f $ is an $n$-cycle permutation over $\gf_{q^m}$.
		On the other hand, assume that  $ f $ is an $n$-cycle permutation over $ \gf_{q^m}$.
		For each $y \in  \lambda(\gf_{q^m}) ^*  $, there exists an $x_0 \in \gf_q^*$ such that $\lambda(x_0)=y$.
		According to Eq. (\ref{kihbjghqbikjndq}), we have $\prod_{i=0}^{n-1} 	h\left( g^{(i)}(  y  )    \right)=1 .$
		Thus $f$ is an $n$-cycle permutation if and only if Eq. (\ref{asdasfasdf}) holds.
		
		Furthermore, we assume $f$ is an $n$-cycle permutation.
		For each $y \in  \lambda(\gf_{q^m})  $, there exists an $x_0 \in \gf_q$ such that $\lambda(x_0)=y$.
		For any $x\in \gf_{q^m}$, we have the following equation according to Eq. (\ref{kihbjghqbikjndq}):
		$$   \lambda\left(	  x_0    \prod_{i=0}^{n-1} 	h\left( g^{(i)}(  y )    \right)         \right)=\lambda(x_0) .$$
		Since $\prod_{i=0}^{n-1} 	h\left( g^{(i)}(  y  )    \right)  \in S$, one can obtain
				\begin{equation*}
			\begin{aligned}
			\prod_{i=0}^{n-1} k\left( 	h\left( g^{(i)}(  y  )  \right)   \right)    \lambda(  x_0  )= &\lambda(x_0)   . \\
					\end{aligned}
	\end{equation*}
		Thus, Eq. (\ref{kjbkybiub}) is necessary for $ f  $ being an $n$-cycle permutation.
	\end{proof}

	\begin{Prop}
		\label{hjksdhgjshs}
		Assume $ q $ is a prime power, $ m,n $ are positive integers, and $ h(x) \in \gf_{q}[x]$ is a polynomial such that for any $y \in \gf_q$, $ h(y)^n =1 $.
		Let $ \lambda(x) \in \gf_{q^m}[x]  $ be either $\lambda_1(x)=\tr_{q^m/q}(x^n) $ or $\lambda_2(x)=\sum\limits_{0\le i_1<i_2<\cdots < i_n\le m - 1} {{x^{{q^{i_1}}+{q^{i_2}}+ \cdots + {q^{i_n}}}}}$.
		Then the polynomial $ f (x) = xh(\lambda(x)) $ is an $n$-cycle permutation over $\gf_{q^m}$.
	\end{Prop}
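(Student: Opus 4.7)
The plan is to apply Theorem \ref{snsxjdchsn} with $S = \gf_q$ and the auxiliary map $k(x) = x^n$, and show that in both cases the induced map $g$ on $\lambda(\gf_{q^m})$ is the identity, reducing condition (\ref{asdasfasdf}) to the hypothesis $h(y)^n = 1$.

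First I would verify that $\lambda_i(\gf_{q^m}) \subseteq \gf_q$ for $i=1,2$. For $\lambda_1$ this is immediate from the definition of the trace. For $\lambda_2$, I would check Frobenius invariance: raising $\lambda_2(x)$ to the $q$-th power permutes the terms, because the index set $\{(i_1,\ldots,i_n) : 0 \le i_1 < \cdots < i_n \le m-1\}$ is stable (modulo $m$) under the shift $i_j \mapsto i_j + 1$ once we use $x^{q^m} = x$ on $\gf_{q^m}$. Thus $\lambda_2(x)^q = \lambda_2(x)$, so $\lambda_2(x) \in \gf_q$. Taking $S = \gf_q$ then gives $h(\lambda(\gf_{q^m})) \subseteq \gf_q = S$, satisfying condition (1) of Theorem \ref{snsxjdchsn}.

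Next I would verify condition (2), namely $\lambda(a\alpha) = k(a)\lambda(\alpha)$ for all $a \in \gf_q$ and $\alpha \in \gf_{q^m}$, with $k(x) = x^n$. For $\lambda_1$: $\lambda_1(a\alpha) = \tr_{q^m/q}(a^n \alpha^n) = a^n \tr_{q^m/q}(\alpha^n) = a^n \lambda_1(\alpha)$, since $a^n \in \gf_q$. For $\lambda_2$: each summand $(a\alpha)^{q^{i_1}+\cdots+q^{i_n}} = a^{q^{i_1}+\cdots+q^{i_n}} \alpha^{q^{i_1}+\cdots+q^{i_n}}$, and because $a \in \gf_q$ satisfies $a^{q^j} = a$, the $a$-factor equals $a^n$, so $\lambda_2(a\alpha) = a^n \lambda_2(\alpha)$. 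Hence $k(x) = x^n$ works in both cases, and clearly $k(0) = 0$. Moreover, the hypothesis $h(y)^n = 1$ for every $y \in \gf_q$, applied to $y = 0$, gives $h(0)^n = 1$, so $h(0) \neq 0$.

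Then I would examine $g(x) = xk(h(x)) = xh(x)^n$ on $\lambda(\gf_{q^m}) \subseteq \gf_q$. For any $y \in \lambda(\gf_{q^m})$ we have $y \in \gf_q$, so $h(y)^n = 1$ by hypothesis, giving $g(y) = y$. Thus $g$ is the identity on $\lambda(\gf_{q^m})$, which trivially permutes it, so every assumption of Theorem \ref{snsxjdchsn} is in force. Iterating, $g^{(i)}(y) = y$ for each $i$, and the criterion (\ref{asdasfasdf}) becomes
\begin{equation*}
\prod_{i=0}^{n-1} h\bigl(g^{(i)}(y)\bigr) = h(y)^n = 1
\end{equation*}
for every nonzero $y \in \lambda(\gf_{q^m})$, which is exactly our assumption. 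Applying Theorem \ref{snsxjdchsn} concludes that $f(x) = xh(\lambda(x))$ is an $n$-cycle permutation over $\gf_{q^m}$. There is no real obstacle; the only mildly technical step is the Frobenius-invariance verification for $\lambda_2$, which is a short index-shifting argument.
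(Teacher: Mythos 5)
Your proof is correct and takes essentially the same approach as the paper's: apply Theorem \ref{snsxjdchsn} with $S = \gf_q$ and $k(x) = x^n$, observe that $g(y) = yh(y)^n = y$ is the identity on $\lambda(\gf_{q^m})$, and reduce condition (\ref{asdasfasdf}) to the hypothesis $h(y)^n = 1$. The only difference is that you spell out the verifications the paper leaves implicit (the containment $\lambda(\gf_{q^m}) \subseteq \gf_q$ via Frobenius invariance for $\lambda_2$, the homogeneity $\lambda(a\alpha) = a^n\lambda(\alpha)$, and $h(0) \neq 0$), which makes your write-up more complete but not different in substance.
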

	\begin{proof}
		In Theorem \ref{snsxjdchsn}, we have $\lambda(\gf_{q^m}) = \gf_q$ and $\lambda(a\alpha)=a^n\lambda(\alpha)$ for all $a \in \gf_{q}$ and $\alpha \in \gf_{q^m}$.
		Since $h(y)^n=1$ holds for $y \in \gf_q$, $ g(y)=yh(y)^n=y $ is an $n$-cycle permutation over $\gf_q$.
		Plugging $h(y)^n=1$ and $ g(y)=y $ into Eq. (\ref{asdasfasdf}), one can get that $f$ is an  $n$-cycle permutation, according to Theorem \ref{snsxjdchsn}.
	\end{proof}

	There are a lot of polynomials $h$ satisfying $h(y)^n=1$, for any $y \in \gf_q$.
	Below are some examples.

	\begin{Cor}
		Let  $q$ be a prime power, $n$ be a positive integer satisfying $n \mid (q-1)$.
		Then, the polynomial $$ f (x) = x\left(     1 +   \theta  \lambda(x)^{(q-1)/n}-  \lambda(x)^{q-1}        \right) $$ is an $n$-cycle permutation over $\gf_{q^m}$, where $ \lambda$ is either $\lambda_1$ or $\lambda_2$ in Proposition \ref{hjksdhgjshs}, $\theta$ is an $n$-th primitive unit root and $m $ is a positive integer.
	\end{Cor}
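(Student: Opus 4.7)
The plan is to recognize that this corollary is a direct specialization of Proposition \ref{hjksdhgjshs}, with the choice
\[
h(x) = 1 + \theta x^{(q-1)/n} - x^{q-1}.
\]
First, I would observe that since $n \mid (q-1)$, the cyclic group $\gf_q^*$ contains an element of order $n$, so the primitive $n$-th root of unity $\theta$ lies in $\gf_q$, and consequently $h(x) \in \gf_q[x]$; this places $h$ in the setting required by Proposition \ref{hjksdhgjshs}.

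The one substantive step is then to verify the hypothesis $h(y)^n = 1$ for every $y \in \gf_q$. I would split this into two cases. If $y = 0$, then $h(0) = 1$ and trivially $h(0)^n = 1$. If $y \in \gf_q^*$, Fermat's little theorem gives $y^{q-1} = 1$, hence
\[
h(y) = 1 + \theta\, y^{(q-1)/n} - 1 = \theta\, y^{(q-1)/n}.
\]
Raising to the $n$-th power yields $h(y)^n = \theta^n\, y^{q-1} = 1 \cdot 1 = 1$, using $\theta^n = 1$.

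Once $h(y)^n = 1$ has been checked on all of $\gf_q$, Proposition \ref{hjksdhgjshs} applies verbatim to $f(x) = x\, h(\lambda(x))$ for either choice $\lambda \in \{\lambda_1, \lambda_2\}$, delivering the $n$-cycle permutation property of $f$ on $\gf_{q^m}$. There is essentially no obstacle here; the only point requiring attention is ensuring that $\theta \in \gf_q$ (so that $h \in \gf_q[x]$), and this is exactly what the divisibility hypothesis $n \mid (q-1)$ supplies.
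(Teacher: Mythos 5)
Your proposal is correct and follows essentially the same route as the paper: both set $h(y) = 1 + \theta y^{(q-1)/n} - y^{q-1}$, check $h(0)=1$ and $h(y)^n = (\theta y^{(q-1)/n})^n = y^{q-1} = 1$ for $y \neq 0$, and then invoke Proposition \ref{hjksdhgjshs}. Your explicit remark that $n \mid (q-1)$ guarantees $\theta \in \gf_q$ (so $h \in \gf_q[x]$) is a small point the paper leaves implicit, but the argument is otherwise identical.
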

	\begin{proof}
		Let $h(y)=1 + \theta y^{(q-1)/n}- y^{q-1}  $ and $f(x)$ can be written as $xh(\lambda(x))$.
		In the following, we will prove $h(y)^n =1$, for $y \in \gf_{q} $.
		First, $h(0)=1$.
		Next, for $y \ne 0$, we have
		$		h(y)^n=    \left( 1 +  \theta y^{(q-1)/n}-1   \right)^n    =      y^{q-1}   =      1 .  $
		Thus, $f$ is an $n$-cycle permutation according to Proposition \ref{hjksdhgjshs}.
	\end{proof}

	When $n=2$, there are also some involutions that easy to be obtained.
	\begin{Cor}\label{njavhwds}
		Let $q$ be an odd prime power.
		The polynomial $$ f (x) = x\left(     1-2  \lambda(x)^{q-1}         \right) $$ is an involution on $\gf_{q^m}$, where $ \lambda$ is either $\lambda_1$ or $\lambda_2$ in Proposition \ref{hjksdhgjshs} and $m $ is a positive integer.
	\end{Cor}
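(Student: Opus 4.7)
The plan is to deduce this corollary as the $n=2$ specialization of Proposition \ref{hjksdhgjshs}. Set $h(y) = 1 - 2y^{q-1}$, so that $f(x) = xh(\lambda(x))$ has exactly the shape treated there. All that remains is to check the single algebraic hypothesis $h(y)^2 = 1$ for every $y \in \gf_q$.

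First I would dispose of the case $y = 0$, where $h(0) = 1$ and so $h(0)^2 = 1$ trivially. For $y \in \gf_q^\ast$ I would invoke Fermat's little theorem over $\gf_q$, which yields $y^{q-1} = 1$, hence $h(y) = 1 - 2 = -1$, and $h(y)^2 = 1$. Since $q$ is odd, $-1 \ne 1$ and the construction is nontrivial; the argument itself does not actually require oddness, but including the hypothesis prevents the polynomial from collapsing to the identity when $\operatorname{char}(\gf_q) = 2$.

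With the condition $h(y)^2 = 1$ verified on $\gf_q$, Proposition \ref{hjksdhgjshs} applies directly (with $n=2$) for either choice $\lambda = \lambda_1$ or $\lambda = \lambda_2$, and it delivers the conclusion that $f$ is a $2$-cycle permutation, i.e.\ an involution on $\gf_{q^m}$. There is essentially no technical obstacle here; the content of the corollary is just the observation that the very simple polynomial $h(y) = 1 - 2y^{q-1}$ takes only the values $\pm 1$ on $\gf_q$ and is therefore an admissible input to the preceding proposition.
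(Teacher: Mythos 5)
Your proposal is correct and follows exactly the paper's own argument: define $h(y)=1-2y^{q-1}$, note $h(0)=1$ and $h(y)=-1$ for $y\in\gf_q^*$ (so $h(y)^2=1$ throughout $\gf_q$), and apply Proposition \ref{hjksdhgjshs} with $n=2$. Your added remark on the role of the oddness hypothesis (preventing collapse to the identity in characteristic $2$) is a small but accurate clarification beyond what the paper states.
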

	\begin{proof}
		Let $h(y)=   1 -2  y^{q-1} $.
		Then $f(x)$ can be written as $xh(\lambda(x))$.
		We have $h(0)=1$ and $h(y)=-1$, for $y \ne 0$.
		Thus, $f$ is an involution according to Proposition \ref{hjksdhgjshs}.
	\end{proof}
	Corollary \ref{njavhwds} is a generalization of Example 5.5 in \cite{2022niufinding}.

	\begin{Cor}\label{kjbsdhvbaihdviqwb}
		Let $q$ be a power of odd prime $p$.
		Assume $a,b,c$ are integers satisfying $a^2 +b^2 \equiv 0 \pmod p  $ and $4c \equiv 0 \pmod {q-1} $.
		Then, the polynomial $$ f (x) = x      \left(  1+   a \lambda(x)^c  +b\lambda(x)^{q-c-1} -\lambda(x)^{q-1}     \right)       $$ is an involution on $\gf_{q^m}$, where $ \lambda$ is either $\lambda_1$ or $\lambda_2$ in Proposition \ref{hjksdhgjshs} and $ m $ is a positive integer.
	\end{Cor}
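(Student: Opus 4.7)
The plan is to apply Proposition \ref{hjksdhgjshs} with $n=2$. Writing $f(x)=xh(\lambda(x))$ where
$$h(y)=1+ay^{c}+by^{q-1-c}-y^{q-1},$$
it suffices to verify that $h(y)^{2}=1$ for every $y\in\gf_{q}$. The case $y=0$ is immediate since $h(0)=1$, so the real content lies in the case $y\in\gf_{q}^{*}$.

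For $y\in\gf_{q}^{*}$ I would first use $y^{q-1}=1$ to collapse $h(y)$ into the Laurent form $ay^{c}+by^{-c}$, and then square:
$$h(y)^{2}=a^{2}y^{2c}+2ab+b^{2}y^{-2c}.$$
The exponent hypothesis $4c\equiv 0\pmod{q-1}$ is tailored exactly for this step: it forces $y^{4c}=1$, hence $y^{-2c}=y^{2c}$, so the square collapses to $(a^{2}+b^{2})\,y^{2c}+2ab$. The coefficient hypothesis $a^{2}+b^{2}\equiv 0\pmod{p}$ then annihilates the $y$-dependent term, reducing $h(y)^{2}$ to a fixed constant in $\gf_{q}$ independent of $y$; verifying that this constant equals $1$ completes the hypothesis of Proposition \ref{hjksdhgjshs} and delivers that $f$ is an involution on $\gf_{q^{m}}$.

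The main obstacle is essentially bookkeeping: correctly carrying out the reduction $y^{q-1-c}=y^{-c}$ on $\gf_{q}^{*}$ and the collapse $y^{-2c}=y^{2c}$ coming from $4c\equiv 0\pmod{q-1}$, while separating the $y=0$ case (for which $h$ is the constant $1$). Once these two hypotheses are seen to line up so that $h(y)^{2}$ becomes a single constant value across $\gf_{q}$, the involution conclusion is forced by Proposition \ref{hjksdhgjshs} with no further work.
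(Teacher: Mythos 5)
Your proposal takes exactly the same route as the paper's proof, and it is correct up to the key identity: for $y\in\gf_q^{*}$ one indeed gets $h(y)^{2}=(a^{2}+b^{2})y^{2c}+2ab=2ab$, a constant. The gap is the step you dismiss as bookkeeping: ``verifying that this constant equals $1$'' cannot be done from the stated hypotheses. Nothing in $a^{2}+b^{2}\equiv 0\pmod p$ and $4c\equiv 0\pmod{q-1}$ forces $2ab\equiv 1\pmod p$, and in fact the corollary as stated is false. Take $q=p=13$, $a=5$, $b=1$, $c=3$, so that $a^{2}+b^{2}=26\equiv 0\pmod{13}$ and $4c=12\equiv 0\pmod{12}$, while $2ab=10$. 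With $\lambda=\lambda_{1}$ and $m=1$ (so $\lambda(x)=x^{2}$), the polynomial is $f(x)=x\left(1+5x^{6}+x^{18}-x^{24}\right)$ over $\gf_{13}$, and one computes $f(1)=6$, $f(6)=3$, hence $f^{(2)}(1)=3\neq 1$. The failure is not an artifact of $m=1$: for any $m$ and any $x_{0}$ with $\lambda(x_{0})=1$ (e.g.\ $x_{0}=1$ when $m=1$), one has $f(x_{0})=6x_{0}$, then $\lambda(6x_{0})=36\,\lambda(x_{0})=10$, and so $f^{(2)}(x_{0})=6x_{0}\,h(10)=6\cdot 7\cdot x_{0}=3x_{0}\neq x_{0}$.

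You are in good company: the paper's own proof makes the identical error, asserting $a^{2}y^{2c}+b^{2}y^{-2c}+2ab=1$ while justifying only that $a^{2}y^{2c}+b^{2}y^{-2c}=0$, i.e.\ silently assuming $2ab=1$; its illustrating example ($q=5$, $a=1$, $b=3$, $c=1$) happens to satisfy $2ab=6\equiv 1\pmod 5$, which hides the problem. Both your argument and the paper's become correct once the hypothesis $2ab\equiv 1\pmod p$ is added to the statement of Corollary \ref{kjbsdhvbaihdviqwb}. Be aware, finally, that failure of the sufficient condition $h(y)^{2}=1$ of Proposition \ref{hjksdhgjshs} does not by itself disprove the claim: for $q=5$, $a=2$, $b=1$, $c=1$ one has $h(y)^{2}=-1$ for $y\neq 0$, yet $f$ is still an involution, because $h(g(y))=h(-y)=-h(y)$ and the full criterion $h(y)\,h(g(y))=1$ of Theorem \ref{snsxjdchsn} holds. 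That is why an explicit counterexample such as the one above is needed to show the gap is fatal rather than repairable within your (and the paper's) line of argument.
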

	\begin{proof}
		Let $h(y)= 1+   a y^c  +by^{q-c-1} -y^{q-1}   $ and $f(x)$ can be written as $xh(\lambda(x))$.
		For $y \ne 0$, we have	
		$
		h(y)^2=    	 \left(    a y^c  +by^{q-c-1}  \right)^2
		=    	  a^2y^{2c}  +b^2y^{-2c}      +2ab
		=      1  ,
		$
		where $a^2y^{2c}  +b^2y^{-2c} =0$ according to $4c \equiv 0 \pmod {q-1} $ and $a^2  \equiv -b^2 \pmod p$.
		Thus $h(y)^2 =1$ holds for $y \in \gf_{q} $.
		Thus, $f$ is an involution according to Proposition \ref{hjksdhgjshs}.
	\end{proof}

	Here, we provide simple examples for Corollary \ref{kjbsdhvbaihdviqwb}.
	Let $h(y)=1 + y+ 3 y^3+ 4 y^4$.
	Then $f(x)$ can be written as $xh(\lambda(x))$.
	We have $h(y)=1$, for $y=0,2,4$ and $h(y)=-1$ for $y=1,3,5$.
	Thus, the polynomial $ f (x) = x\left(    1 + \lambda(x)+ 3 \lambda(x)^3+ 4 \lambda(x)y^4          \right) $ is an involution on $\gf_{5^m}$ according to Proposition \ref{hjksdhgjshs}, where $ \lambda(x) =\sum\limits_{0\le i < j\le m - 1} {{x^{{5^i} + {5^j}}}}  $ and $m $ is a positive integer.

	\subsection{$n$-cycle permutations of the form $ h(\psi(x)) \varphi(x)+g(\psi(x)) $}
	In \cite[Theorem 5.1]{akbary2011constructing}, Akbary et al. investigated the permutation property of $ h(\psi(x)) \varphi(x)+g(\psi(x)) $.
	In this subsection, we consider the $n$-cycle property of permutations with the form of $ h(\psi(x)) \varphi(x)+g(\psi(x)) $.
	Several constructions are given.

		\begin{Th}
			\label{kjhbjhvgujygvfuyv}
			Consider any polynomial $g \in \mathbb{F}_{q^m}[x]$, any $q$-polynomials $\varphi, \psi \in \mathbb{F}_{q^m}[x]$ satisfying that $ \varphi$ is an $n$-cycle permutation over $\gf_{q^m}$ and $\varphi \circ \psi={\psi} \circ \varphi$.
			Then $$ f(x)= \varphi(x)+g(\psi(x)) $$ is an $n$-cycle permutation over $\mathbb{F}_{q^m}$ if and only if
			\begin{equation}\label{asuyhdgiaysdjgavsdas}
				\sum_{k=0}^{n-1} \varphi^{(\mtl{n-1-k})} \left(    g(   \bar{f}^{(k)}(  y )         )       \right) =0
			\end{equation}
			holds for any $y \in \psi(\gf_{q^m}) $, where $\bar{f}(x)= \varphi(x)+{\psi}(g(x))$.
		\end{Th}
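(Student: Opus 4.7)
The plan is to mirror the strategy used in the proof of Theorem \ref{snsxjdchsn}: compute $f^{(n)}(x)$ in closed form by induction on the number of compositions, and then read off the $n$-cycle condition from the resulting expression. Two facts drive everything. Since $\varphi$ and $\psi$ are $q$-polynomials, both are $\mathbb{F}_q$-additive on $\gf_{q^m}$; and by assumption $\varphi\circ\psi=\psi\circ\varphi$, so every iterate of $\varphi$ commutes with $\psi$.

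The first auxiliary step is the semiconjugacy $\psi\circ f^{(k)}=\bar{f}^{(k)}\circ\psi$ for all $k\ge 0$. The base case $k=1$ is immediate from the additivity of $\psi$ and $\psi\circ\varphi=\varphi\circ\psi$, since
$$\psi(f(x))=\psi(\varphi(x))+\psi(g(\psi(x)))=\varphi(\psi(x))+\psi(g(\psi(x)))=\bar{f}(\psi(x)),$$
and the inductive step follows by applying $\bar{f}$ to both sides of the induction hypothesis and using the base case.

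The main step is the closed-form identity
$$f^{(s)}(x)=\varphi^{(s)}(x)+\sum_{k=0}^{s-1}\varphi^{(s-1-k)}\!\left(g\!\left(\bar{f}^{(k)}(\psi(x))\right)\right),$$
proved by induction on $s$. In the inductive step I would split $f^{(s+1)}(x)=\varphi(f^{(s)}(x))+g(\psi(f^{(s)}(x)))$: the first term is handled by $\mathbb{F}_q$-linearity of $\varphi$, which shifts every summand up one power of $\varphi$ and yields $\varphi^{(s+1)}(x)+\sum_{k=0}^{s-1}\varphi^{(s-k)}(g(\bar{f}^{(k)}(\psi(x))))$, while the second term equals $g(\bar{f}^{(s)}(\psi(x)))$ by the semiconjugacy above. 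The two pieces assemble into the formula at level $s+1$. Setting $s=n$ and using $\varphi^{(n)}=I$ yields
$$f^{(n)}(x)=x+\sum_{k=0}^{n-1}\varphi^{(n-1-k)}\!\left(g\!\left(\bar{f}^{(k)}(\psi(x))\right)\right),$$
so $f^{(n)}=I$ is equivalent to the vanishing of the sum for every $x\in\gf_{q^m}$, which is exactly Eq. (\ref{asuyhdgiaysdjgavsdas}) evaluated at each $y=\psi(x)\in\psi(\gf_{q^m})$. Finally, $f^{(n)}=I$ alone already forces $f$ to be a permutation with $f^{-1}=f^{(n-1)}$, so the permutation part of the conclusion is automatic.

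I expect the only real obstacle to be bookkeeping: matching the order in which the summands are produced against the index convention $\varphi^{(n-1-k)}$ used in the theorem statement, and ensuring the telescoping stays clean when the extra $\varphi$-layer is pushed past each summand during the inductive step. Once $\mathbb{F}_q$-linearity of $\varphi$ and the semiconjugacy $\psi\circ f^{(k)}=\bar{f}^{(k)}\circ\psi$ are in hand, the rest is routine.
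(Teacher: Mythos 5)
Your proposal is correct and follows essentially the same route as the paper's proof: both derive the closed form $f^{(n)}(x)=x+\sum_{k=0}^{n-1}\varphi^{(n-1-k)}\left(g\left(\bar{f}^{(k)}(\psi(x))\right)\right)$ from the additivity of $\varphi,\psi$, the semiconjugacy $\psi\circ f=\bar{f}\circ\psi$, and $\varphi^{(n)}=I$, and then read off the equivalence over $y\in\psi(\gf_{q^m})$. The only difference is presentational: you make the induction on $s$ and the lemma $\psi\circ f^{(k)}=\bar{f}^{(k)}\circ\psi$ explicit, whereas the paper unwinds the composition step by step ("so on and so forth"), which is the same computation.
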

		\begin{proof}	
			For  any $x \in \gf_{q^m}$,  and $y \in \psi(\gf_{q^m}) $ such that $y= \psi(x)$,
			we have
%
%
%
%
%
			\begin{equation*}
				\begin{aligned}
					f^{(n)}(x)		=&           \varphi(         f\circ f^{(n-2)}(x)           )+g(\psi \circ    f\circ f^{(n-2) }  \mtl{(x)}       )       \\
					=&           \varphi(      \varphi(f^{(n-2)}(x))  )      +         \varphi(   g(  \psi(f^{(n-2)}(x))   )           )+g(  \bar{f} \circ \psi  \circ f^{(n-2) }(x)      )       \\	
					=& \sum_{k=0}^{1} \varphi^{(\mtl{1-k})} \left(    g(   \bar{f}^{(k)}(  \psi(f^{(n-2)}(x))  )         )       \right) +   \varphi^{(2)}(f^{(n-2)}(x)). \\
				\end{aligned}
			  \end{equation*}
		 So on and so forth, this will lead to
		  \begin{equation*}\label{}
		  	\begin{aligned}
		  		f^{(n)}(x)			=&    \sum_{k=0}^{2} \varphi^{(\mtl{2-k})} \left(    g(   \bar{f}^{(k)}(  \psi(f^{(n-3)}(x))  )         )       \right) +   \varphi^{(3)}(f^{(n-3)}(x)) , \\
		  	\end{aligned}
		  \end{equation*}
	  	and finally arrive at
	  \begin{equation}\label{fgegfvcssfef}
	  	\begin{aligned}
	  		f^{(n)}(x)		=& \sum_{k=0}^{n-1} \varphi^{(\mtl{n-1-k})} \left(    g(   \bar{f}^{(k)}(  \psi(x)  )         )       \right) +   \varphi^{(n)}(x)  , \\
=& \sum_{k=0}^{n-1} \varphi^{(\mtl{n-1-k})} \left(    g(   \bar{f}^{(k)}(  y )         )       \right) + x , \\
	  	\end{aligned}
	  \end{equation}
which indicates that  $ f $ is an $n$-cycle permutation over $ \gf_{q^m} $.

		On the one hand, assume that Eq. (\ref{asuyhdgiaysdjgavsdas}) holds for any $y \in \psi(\gf_{q^m}) $.
Then, according to Eq. (\ref{fgegfvcssfef}), $ f $ is an $n$-cycle permutation over $\gf_{q^m}$.
On the other hand, assume that  $ f $ is an $n$-cycle permutation over $ \gf_{q^m}$.
For each $y \in \psi(\gf_{q^m}) $, there exists an $x_0 \in \gf_q$ such that $\psi(x_0)=y$.
According to Eq. (\ref{fgegfvcssfef}), we have $	\sum_{k=0}^{n-1} \varphi^{(\mtl{n-1-k})} \left(    g(   \bar{f}^{(k)}(  y )         )       \right) =  0 .$
Thus $f$ is an $n$-cycle permutation if and only if Eq. (\ref{asuyhdgiaysdjgavsdas}) holds  for any $y \in \psi(\gf_{q^m}) $.
\end{proof}		
		The proposition below is a generalization of involutory criterion in \cite{2022niufinding}.
		\begin{Prop}
			\label{akbary2011constructingwithlinearBi}
			Define $\psi,  g$  as in Theorem \ref{kjhbjhvgujygvfuyv}.
			Consider any $q$-polynomial $ \psi \in \mathbb{F}_{q^m}[x]$ satisfying $g(   \psi(\gf_{q^m})  ) \ne \{0\}$.
			Assume  $g(x) \in \mathbb{F}_{q^m}[x]$ is nonzero such that $g(\gf_{q^m}) \subseteq \ker(\psi)$.
			Then,
			$$f(x)=  x +  g(\psi(x) )   $$
			is an $n$-cycle permutation over $ \gf_{q^m} $ if and only if $p$ is  of of $n$, where $ p $ is the characteristic of $\gf_{q^m} $.
		\end{Prop}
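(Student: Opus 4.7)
The plan is to specialize Theorem \ref{kjhbjhvgujygvfuyv} to the case $\varphi(x)=x$ (the identity map). The identity is trivially a $q$-polynomial, is an $n$-cycle permutation for every positive integer $n$, and commutes with every map $\psi$, so all hypotheses of Theorem \ref{kjhbjhvgujygvfuyv} hold automatically; hence the theorem's characterization can be applied directly to $f(x)=x+g(\psi(x))$.

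Next I would compute the auxiliary map $\bar f(x)=\varphi(x)+\psi(g(x))=x+\psi(g(x))$. The hypothesis $g(\gf_{q^m})\subseteq\ker(\psi)$ forces $\psi(g(x))=0$ for every $x\in\gf_{q^m}$, so $\bar f$ coincides with the identity on $\gf_{q^m}$. Consequently $\bar f^{(k)}(y)=y$ for every $k\ge 0$ and every $y\in\psi(\gf_{q^m})$. Since $\varphi^{(n-1-k)}$ is also the identity, the left-hand side of Eq.~(\ref{asuyhdgiaysdjgavsdas}) collapses to $\sum_{k=0}^{n-1} g(y)=n\,g(y)$.

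Theorem \ref{kjhbjhvgujygvfuyv} now says that $f$ is an $n$-cycle permutation if and only if $n\,g(y)=0$ in $\gf_{q^m}$ for every $y\in\psi(\gf_{q^m})$. If $p\mid n$, then $n\equiv 0$ in $\gf_{q^m}$ and this equality holds for all such $y$ trivially. Conversely, the hypothesis $g(\psi(\gf_{q^m}))\neq\{0\}$ furnishes at least one $y_0\in\psi(\gf_{q^m})$ with $g(y_0)\neq 0$; since $g(y_0)$ is then a nonzero element of $\gf_{q^m}$, the equation $n\,g(y_0)=0$ can hold only when $p\mid n$. This yields the claimed equivalence, with the intended statement being that $p$ is a divisor of $n$.

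The main obstacle is the compatibility check at the very start: one must verify that taking $\varphi$ to be the identity genuinely meets the hypotheses of Theorem \ref{kjhbjhvgujygvfuyv} so that its ``if and only if'' can be invoked without caveats. Once that is confirmed, the identification $\bar f=\mathrm{identity}$ trivialises every iterate $\bar f^{(k)}$ and reduces the entire cycle condition to a single scalar equation $n\,g(y)=0$, and the nontrivial hypothesis $g(\psi(\gf_{q^m}))\neq\{0\}$ is invoked in exactly the one place needed to exclude the degenerate case $g\equiv 0$ on $\psi(\gf_{q^m})$.
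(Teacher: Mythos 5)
Your proposal is correct and follows essentially the same route as the paper: specialize Theorem \ref{kjhbjhvgujygvfuyv} to $\varphi(x)=x$, use $g(\gf_{q^m})\subseteq\ker(\psi)$ to see that $\bar f$ is the identity, and collapse Eq.~(\ref{asuyhdgiaysdjgavsdas}) to $n\,g(y)=0$, which holds for all $y\in\psi(\gf_{q^m})$ precisely when $p$ divides $n$. If anything, your write-up is slightly more complete than the paper's, since you explicitly verify the hypotheses of the theorem for the identity map and explicitly invoke $g(\psi(\gf_{q^m}))\neq\{0\}$ in the ``only if'' direction, steps the paper leaves implicit.
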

		\begin{proof}
			Clearly $\ker(\varphi) \cap \ker (\psi)=\{0\} $, where $\varphi(x)=x$.
			Together with $g(\gf_{q^m}) \subseteq \ker(\psi)$, we have $\bar{f}(x)= x+\psi(g(x))=x$.
			According to Theorem \ref{kjhbjhvgujygvfuyv}, $f$ is an $n$-cycle permutation if and only if $ n  g(   y    )   =0$, which is equivalent to the condition that $p$ is a factor of $n$.
		\end{proof}
		
		Generally speaking, it is not hard to obtain $\varphi$ and $\psi$ satisfying $\varphi \circ \psi={\psi} \circ \varphi$.
		For example, both $\psi$ and $\varphi$ are $q$-polynomials over $\mathbb{F}_{q}$.
		In the corollary below, 	note that $\varphi(x)$ permutes $\gf_q$ due to $\sum_{i=0}^{m-1} \alpha_i \ne 0 $.

		\begin{Cor}
			\label{akbary2011constructingwithlinearBix}
			Assume $q$-polynomial $\psi$ satisfying  $\psi(\gf_{q})=\{0\}$.
			Let $H(x) $ be a nonzero polynomial over $\gf_{q^m}$ and $g(x)$ be either $g_1(x)=\tr_{q^m/q}(H(x))$ with $H(\psi(\gf_{q^m})) \not\subset \ker(\tr_{q^m/q}) $ or $g_2(x)=H(x)^s$ with $H(\psi(\gf_{q^m})) \ne \{0\}$, where positive integer $ s $ satisfies $ s(q-1) \equiv 0 \pmod{q^m-1} $.
			Then,
			$$f(x)=  x+  g(\psi(x) )   $$
			is an $n$-cycle permutation over $ \gf_{q^m} $ if and only if $p$ is a factor of $n$, where $ p $ is the characteristic of $\gf_{q^m} $.
		\end{Cor}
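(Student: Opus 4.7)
The plan is to apply Proposition~\ref{akbary2011constructingwithlinearBi} directly, choosing $\varphi(x)=x$ so that the polynomial $f(x)=x+g(\psi(x))$ fits the setup of that result. For each of the two cases $g=g_1$ and $g=g_2$ it suffices to verify the two hypotheses of the Proposition, namely (i) $g(\gf_{q^m})\subseteq \ker(\psi)$ and (ii) $g(\psi(\gf_{q^m}))\neq\{0\}$; nonzeroness of $g$ as a polynomial will follow from (ii). Once both are in place, the Proposition yields the stated equivalence with $p\mid n$, completing the argument.

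For $g=g_1$, the values of the trace already lie in $\gf_q$, so $g_1(\gf_{q^m})\subseteq \gf_q$. The hypothesis $\psi(\gf_q)=\{0\}$ says exactly $\gf_q\subseteq\ker(\psi)$, giving (i). Since $g_1(\psi(\gf_{q^m}))=\tr_{q^m/q}\bigl(H(\psi(\gf_{q^m}))\bigr)$, condition (ii) is equivalent to $H(\psi(\gf_{q^m}))\not\subset \ker(\tr_{q^m/q})$, which is the stated assumption.

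For $g=g_2$, the decisive step is to show that $H(x)^s\in\gf_q$ for every $x\in\gf_{q^m}$: if $H(x)\neq 0$ then the congruence $s(q-1)\equiv 0\pmod{q^m-1}$ gives $H(x)^{s(q-1)}=1$, so $H(x)^s\in\gf_q^*$; if $H(x)=0$, trivially $H(x)^s=0\in\gf_q$. Hence $g_2(\gf_{q^m})\subseteq \gf_q\subseteq\ker(\psi)$, establishing (i), and (ii) reduces to $H(\psi(\gf_{q^m}))\neq\{0\}$. Applying Proposition~\ref{akbary2011constructingwithlinearBi} in each case then delivers the equivalence. The only delicate point is the $g_2$ case, where one must pass from the congruence on $s$ to membership in $\gf_q$ via the characterization $y\in\gf_q^*\Leftrightarrow y^{q-1}=1$; beyond this, the proof is a direct reduction to the previously established Proposition.
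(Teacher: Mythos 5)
Your proof is correct and takes essentially the same approach as the paper: a direct reduction to Proposition~\ref{akbary2011constructingwithlinearBi} by verifying $g(\gf_{q^m})\subseteq\gf_q\subseteq\ker(\psi)$ and $g(\psi(\gf_{q^m}))\neq\{0\}$ in each of the two cases. The paper's own proof is just the one-line remark that $g(\gf_{q^m})\subseteq\ker(\psi)$ follows ``by the expression of $g$''; your argument for $g_2$ via $s(q-1)\equiv 0\pmod{q^m-1}$ simply spells out the detail the paper leaves implicit.
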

		\begin{proof}
			One can obtain $g(\gf_{q^m}) \subseteq \ker(\psi)$ by the expression of $g$.
			Thus $f$ is an $n$-cycle permutation if and only if $p$ is a factor of $n$.
		\end{proof}

		
		Such conditions in Corollary \ref{akbary2011constructingwithlinearBix} are not hard to meet.
				\begin{Exa}
Let $q$ be a power of $3$.
Then 	$x+ \tr_{q^m/q}\left(   (      x^q-x  )^2  \right)     $
is a triple-cycle permutation over $\gf_{q^m}$. 
		\end{Exa}

		\begin{Cor}
			Assume $s$ is an integer and $ c\in \gf_{q^2}^*$ satisfying $c+c^q=0$.
			Then, $ f(x)=  x+c \tr_{q^2/q}(x)^s   $ is a triple-cycle permutation over $\gf_{q^2}$ if and only if $q$ is a power of $3$.
		\end{Cor}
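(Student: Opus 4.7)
The plan is to apply Theorem \ref{kjhbjhvgujygvfuyv} with $m = 2$, $n = 3$, $\varphi(x) = x$, $\psi(x) = \tr_{q^2/q}(x) = x + x^q$, and $g(y) = c y^s$. Under this choice, $\varphi$ is trivially a $3$-cycle $q$-polynomial that commutes with $\psi$, so the structural hypotheses of the theorem hold. Moreover, $\psi(\gf_{q^2}) = \gf_q$.

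Next I would compute the auxiliary polynomial
\begin{equation*}
\bar{f}(x) = \varphi(x) + \psi(g(x)) = x + c x^s + c^q x^{sq},
\end{equation*}
and restrict it to the image $\gf_q$. Since $y^q = y$ for $y \in \gf_q$, the hypothesis $c + c^q = 0$ yields $\bar{f}(y) = y + (c+c^q) y^s = y$. Thus $\bar{f}$ acts as the identity on $\gf_q = \psi(\gf_{q^2})$, so $\bar{f}^{(k)}(y) = y$ for every $y \in \gf_q$ and every $k \ge 0$. Plugging this into the criterion (\ref{asuyhdgiaysdjgavsdas}) for $n = 3$, the sum collapses to
\begin{equation*}
\sum_{k=0}^{2} \varphi^{(2-k)}\!\bigl( g(\bar{f}^{(k)}(y)) \bigr) = 3\, g(y) = 3 c y^s.
\end{equation*}

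By Theorem \ref{kjhbjhvgujygvfuyv}, $f$ is a triple-cycle permutation over $\gf_{q^2}$ if and only if $3 c y^s = 0$ for every $y \in \gf_q$. Taking $y = 1$ and using $c \ne 0$, this is equivalent to $3 = 0$ in $\gf_{q^2}$, that is, to $q$ being a power of $3$, which proves both directions simultaneously. There is no real obstacle in the argument itself; the only point worth flagging is why Corollary \ref{akbary2011constructingwithlinearBix} does not apply directly: its hypothesis $\psi(\gf_q) = \{0\}$ fails for $\psi = \tr_{q^2/q}$ in odd characteristic, since $\tr_{q^2/q}(a) = 2a \ne 0$ for $a \in \gf_q^*$. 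For this reason one must work with Theorem \ref{kjhbjhvgujygvfuyv} directly and use $c + c^q = 0$ as the precise replacement that guarantees $g$ maps $\psi(\gf_{q^2})$ into $\ker(\psi)$.
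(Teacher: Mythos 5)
Your proof is correct and takes essentially the same route as the paper: both arguments compute $\bar{f}(y)=y$ on $\psi(\gf_{q^2})=\gf_q$ using $c+c^q=0$, collapse the criterion sum to $3cy^s$, and conclude that the triple-cycle property holds precisely in characteristic $3$. If anything, your direct appeal to Theorem \ref{kjhbjhvgujygvfuyv} is cleaner than the paper's citation of Proposition \ref{akbary2011constructingwithlinearBi}, whose literal hypothesis $g(\gf_{q^2})\subseteq\ker(\psi)$ need not hold for arbitrary $s$; your observation that $g(\psi(\gf_{q^2}))\subseteq\ker(\psi)$ is the condition actually needed.
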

		\begin{proof}	
			We have $$\bar{f}(y)=y+\tr_{q^2/q}( cx^s)=y+  cx^s  + c^qx^{s}  =y  .$$ 
			Then, $				\sum_{k=0}^{2-1} g\left(       \bar{f}^{(k)}(  y )         \right) =   3 cy^s.$
			According to Proposition \ref{akbary2011constructingwithlinearBi}, the result is established.
		\end{proof}


		\newpage
		
		\begin{Prop}
			Consider any polynomial $g(x)=\sum_{t=0}^{q^3-2}  a_t x^t \in \mathbb{F}_{q^3}[x]$, and for each $t$, $\tr_{q^3/q}(a_t) =0$.
			Then $$ f(x)= x^q + g(\tr_{q^3/q}(x)) $$ is a triple-cycle permutation over $\mathbb{F}_{q^3}$.
		\end{Prop}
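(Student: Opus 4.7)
The plan is to apply Theorem \ref{kjhbjhvgujygvfuyv} with the choices $\varphi(x)=x^q$ and $\psi(x)=\tr_{q^3/q}(x)$, treating $g$ as given. Both are $q$-polynomials in $\mathbb{F}_{q^3}[x]$. The map $\varphi$ is a triple-cycle permutation on $\mathbb{F}_{q^3}$ because $\varphi^{(3)}(x)=x^{q^3}=x$. The commutativity $\varphi\circ\psi=\psi\circ\varphi$ is immediate since
$$\psi(x)^q=(x+x^q+x^{q^2})^q=x^q+x^{q^2}+x^{q^3}=\psi(x),$$
so both $\varphi(\psi(x))$ and $\psi(\varphi(x))$ equal $\psi(x)$. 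Moreover $\psi(\mathbb{F}_{q^3})=\mathbb{F}_q$.

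Next I would exploit the trace-zero hypothesis on the coefficients of $g$ to show that the auxiliary map $\bar{f}(x)=\varphi(x)+\psi(g(x))=x^q+\tr_{q^3/q}(g(x))$ acts as the identity on $\mathbb{F}_q$. Indeed, for $y\in\mathbb{F}_q$ we have $y^q=y$, and since $y^t\in\mathbb{F}_q$ can be pulled outside the trace,
$$\tr_{q^3/q}(g(y))=\sum_{t=0}^{q^3-2}\tr_{q^3/q}(a_t\, y^t)=\sum_{t=0}^{q^3-2} y^t\,\tr_{q^3/q}(a_t)=0,$$
by hypothesis. Therefore $\bar{f}(y)=y$ and hence $\bar{f}^{(k)}(y)=y$ for every $k\ge 0$ and every $y\in\mathbb{F}_q$.

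Finally, I would check condition (\ref{asuyhdgiaysdjgavsdas}) of Theorem \ref{kjhbjhvgujygvfuyv} with $n=3$. Using $\bar{f}^{(k)}(y)=y$ on $\psi(\mathbb{F}_{q^3})=\mathbb{F}_q$, the sum collapses to
$$\sum_{k=0}^{2}\varphi^{(2-k)}\!\left(g(\bar{f}^{(k)}(y))\right)=g(y)^{q^2}+g(y)^q+g(y)=\tr_{q^3/q}(g(y))=0,$$
the last equality being the same vanishing argument applied once more. Theorem \ref{kjhbjhvgujygvfuyv} then yields that $f$ is a triple-cycle permutation on $\mathbb{F}_{q^3}$.

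The only real content of the proof is the observation that the coefficient condition $\tr_{q^3/q}(a_t)=0$ for all $t$ forces $\tr_{q^3/q}(g(y))$ to vanish on the whole subfield $\mathbb{F}_q$, which simultaneously collapses $\bar{f}$ to the identity on $\mathbb{F}_q$ and kills the criterion sum. I do not foresee any technical obstacle; the argument is a direct specialization of the preceding theorem in which the $q$-polynomial structure of $\varphi$ and $\psi$ interacts neatly with the trace hypothesis.
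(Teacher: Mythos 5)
Your proposal is correct and follows essentially the same route as the paper: both apply Theorem \ref{kjhbjhvgujygvfuyv} with $\varphi(x)=x^q$, $\psi(x)=\tr_{q^3/q}(x)$, use the coefficient hypothesis to get $\tr_{q^3/q}(g(y))=0$ on $\gf_q$ so that $\bar f$ fixes $\gf_q$, and collapse the criterion sum to $\tr_{q^3/q}(g(y))=0$. Your write-up is in fact slightly more careful than the paper's, since you explicitly verify the commutativity $\varphi\circ\psi=\psi\circ\varphi$ and note that $\bar f(y)=y$ (rather than $y^q$) on the relevant set $\psi(\gf_{q^3})=\gf_q$.
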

		\begin{proof}
			In Theorem \ref{kjhbjhvgujygvfuyv}, let $\varphi(x)=x^q$ and $\psi(x)=\tr_{q^3/q}(x)$.
			Clearly we have $\varphi \circ \psi={\psi} \circ \varphi$.
			Then, one can verify that $\tr_{q^3/q}(g(x))=0$ holds for any $x \in \gf_{q}$, since $\tr_{q^3/q}(a_t) =0$.
			Thus, $ \bar{f}(x)= x^q +\tr_{q^3/q}(g(x))=x^q$.
			Then, for any $y \in \gf_q$, 
				\begin{equation*}\label{}
					\mtl{\left( 	g(    y       ) \right)^{q^2}                    +	\left(    g(    y^q         )       \right) ^q   +   g(   y^{q^2}         )        =\tr_{q^3/q}(g(y))} =0,
				\end{equation*}
				which is equivalent to
				\begin{equation*}\label{}
					\sum_{k=0}^{2} \varphi^{(\mtl{2-k})} \left(    g(   \bar{f}^{(k)}(  y )         )       \right) =0.
				\end{equation*}
				According to Theorem \ref{kjhbjhvgujygvfuyv}, $f$ is a triple-cycle permutation over $\mathbb{F}_{q^3}$.
			\end{proof}

			%
			%
			%
			%
			%
			%
			%
			
			\subsection{$n$-cycle permutations of the form $g\left(  x^{q^i} -x +\delta   \right) +bx $}
				In \cite[Proposition 3]{zheng2019two}, Zheng et al. investigated the permutation property between $ g\left(  x^{q^i} -x +\delta   \right) +bx $ and $ g(x)^{q^i}-g(x)+bx $. 
				Niu et al. \cite{niu2019new} got an involutory version using compositional inverses.	
				In this subsection, we consider the $n$-cycle property of permutations with the form $g\left(  x^{q^i} -x +\delta   \right) +bx $.
				Some constructions are also provided.

			For their $ n $-cycle properties, we have the following results, similarly with above subsections.
			\begin{Th}
				\label{jingdianiff}
				Let $ \gf_{q^m} $ be the degree $ m $ extension of the finite field $ \gf_q $ and $ \delta \in \gf_{q^m} $, $g(x)\in\gf_{q^m}[x]$.
				Then $ f(x) =g( x^{q^i} -x +\delta)+x $ is an $n$-cycle permutation over $ \gf_{q^m} $ if and only if
				\begin{equation}\label{jhuvytcuhbkijb}
					\sum_{k=0}^{n-1} g\left(  h^{(k)}(y)     \right)   =0
				\end{equation}
				holds for any $y \in S_{\delta} =\left\{   x^{q^i} -x +\delta    \  | \  x \in \gf_{q^m}\right\} $, where $ h(y) = g(y)^{q^i} - g(y) + y  $ is on $S_{\delta}  $ , $ i $ is an integer with $ 1 \le i \le m-1$ and $\ell=\gcd(i,m)$.
			\end{Th}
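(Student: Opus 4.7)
The plan is to mirror the strategy used in the two previous theorems of this section: introduce an auxiliary variable that captures the relevant inner expression, check that application of $f$ induces a tractable map on that variable (an AGW-type square), and then iterate to obtain a closed form for $f^{(n)}$.

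Concretely, set $y = x^{q^i} - x + \delta$, so $y \in S_\delta$ for every $x \in \gf_{q^m}$. Plugging $f(x) = g(y) + x$ back into $z^{q^i} - z + \delta$ with $z = f(x)$ gives
\[
f(x)^{q^i} - f(x) + \delta = \bigl(g(y)+x\bigr)^{q^i} - \bigl(g(y)+x\bigr) + \delta = g(y)^{q^i} - g(y) + y = h(y),
\]
so $h$ is a well-defined self-map of $S_\delta$ and the map $x \mapsto x^{q^i}-x+\delta$ intertwines $f$ on $\gf_{q^m}$ with $h$ on $S_\delta$. This is exactly the AGW-type diagram with $\lambda = \bar{\lambda}$ equal to $x \mapsto x^{q^i}-x+\delta$.

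Next I would show by induction on $k$ that
\[
f^{(k)}(x) = x + \sum_{j=0}^{k-1} g\!\left( h^{(j)}(y) \right), \qquad \bigl(f^{(k)}(x)\bigr)^{q^i} - f^{(k)}(x) + \delta = h^{(k)}(y),
\]
simultaneously. The base case $k=1$ is just the definition of $f$ together with the displayed computation above. For the inductive step,
\[
f^{(k+1)}(x) = g\!\left( \bigl(f^{(k)}(x)\bigr)^{q^i} - f^{(k)}(x) + \delta \right) + f^{(k)}(x) = g\!\left(h^{(k)}(y)\right) + f^{(k)}(x),
\]
which combined with the first identity for $k$ yields the first identity for $k+1$; the second then follows by repeating the calculation in the previous paragraph with $f^{(k)}(x)$ in place of $x$ and $h^{(k)}(y)$ in place of $y$. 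Taking $k=n$ gives $f^{(n)}(x) - x = \sum_{k=0}^{n-1} g\!\left(h^{(k)}(y)\right)$, which is the closed form around which the theorem is built.

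From this closed form the equivalence is immediate. Sufficiency: if Eq.~(\ref{jhuvytcuhbkijb}) holds for every $y \in S_\delta$, then $f^{(n)}(x) = x$ on all of $\gf_{q^m}$. Necessity: given $y \in S_\delta$, pick any preimage $x$ with $x^{q^i}-x+\delta = y$; then $f^{(n)}(x) = x$ forces the sum to vanish at that $y$, and as $y$ ranges over $S_\delta$ we recover the stated condition. I do not anticipate a serious obstacle: the whole argument is a clean iteration once $h$ is identified as the induced map on $S_\delta$. The only step deserving genuine care is verifying that $h(S_\delta) \subseteq S_\delta$, so that the iterates $h^{(k)}(y)$ are meaningful; this is precisely the first computation. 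The hypothesis $1 \le i \le m-1$ with $\ell = \gcd(i,m)$ plays no active role in the equivalence itself and is merely ambient information about $S_\delta$ that will be used when specializing to concrete constructions.
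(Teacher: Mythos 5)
Your proof is correct and follows essentially the route the paper intends: the paper omits this proof, saying it is ``similar with that in Theorem \ref{kjhbjhvgujygvfuyv}'', and your argument is exactly that proof specialized to $\varphi=I$ and $\lambda(x)=x^{q^i}-x+\delta$, namely establishing the closed form $f^{(n)}(x)=x+\sum_{k=0}^{n-1}g\left(h^{(k)}(y)\right)$ by iteration and then using surjectivity of $x\mapsto x^{q^i}-x+\delta$ onto $S_\delta$ for the two directions. Your explicit induction (tracking both $f^{(k)}(x)$ and the invariant $(f^{(k)}(x))^{q^i}-f^{(k)}(x)+\delta=h^{(k)}(y)$ simultaneously) is a cleaner formalization of the paper's ``so on and so forth'' expansion, but it is the same idea.
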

			\begin{proof}
				Its proof is similar with that in Theorem \ref{kjhbjhvgujygvfuyv}, and thus it is omitted.
			\end{proof}

			\begin{Prop}
				\label{Zconstruction}
				Let  $ m, i $ be integers with $ 1 \le i \le m-1 $, $\ell=\gcd(i,m)$ and $ \gf_{q^m} $ be the finite field containing $q^m$ elements.
				Assume $ \delta \in \gf_{q^m} $ and nonzero polynomial $g(x)\in\gf_{q^m}[x]$ satisfying $g( \gf_{q^m}      ) \subseteq  \gf_{q^i}$ and $g(S_{\delta}) \ne \{0\}$, where $ S_{\delta} =\left\{   x^{q^i} -x +\delta   \  | \  x \in \gf_{q^m}\right\}$.
				Then, $$f(x)=   x +    g(x^{q^i} -x +\delta  )       $$
				is an $n$-cycle permutation over $ \gf_{q^m} $ if and only if $p$ is a factor of $n$, where $ p $ is the characteristic of $\gf_{q^m} $.
			\end{Prop}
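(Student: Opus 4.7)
The plan is to apply Theorem \ref{jingdianiff} directly, using the structural assumption $g(\gf_{q^m}) \subseteq \gf_{q^i}$ to collapse the iterated map $h$ to the identity on $S_\delta$. First, I would observe that since every value of $g$ lies in $\gf_{q^i}$, i.e., in the fixed field of the $q^i$-th power Frobenius, we have $g(y)^{q^i} = g(y)$ for all $y \in \gf_{q^m}$. In particular, the auxiliary map $h(y) = g(y)^{q^i} - g(y) + y$ from Theorem \ref{jingdianiff} simplifies to $h(y) = y$ on $S_\delta$, so that $h^{(k)}(y) = y$ for every $k \geq 0$.

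With this simplification in hand, the criterion $\sum_{k=0}^{n-1} g(h^{(k)}(y)) = 0$ of Theorem \ref{jingdianiff} reduces to the single equation $n \cdot g(y) = 0$ for every $y \in S_\delta$. Since $\gf_{q^m}$ has characteristic $p$, this is equivalent to $(n \bmod p) \cdot g(y) = 0$. If $p \mid n$ this is automatic, giving the backward direction and showing $f$ is an $n$-cycle permutation. For the forward direction, the hypothesis $g(S_\delta) \neq \{0\}$ furnishes some $y_0 \in S_\delta$ with $g(y_0) \neq 0$, which forces $n \equiv 0 \pmod p$, i.e., $p \mid n$.

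I do not foresee a genuine obstacle here: the hard analytical work has already been carried out inside Theorem \ref{jingdianiff}, and the present proposition is essentially a clean corollary obtained by exploiting the subfield structure of $g$. The only step worth double-checking is that the assumption $g(\gf_{q^m}) \subseteq \gf_{q^i}$ is being applied consistently. Since $g \in \gf_{q^m}[x]$, its image actually lies in $\gf_{q^m} \cap \gf_{q^i} = \gf_{q^\ell}$, but this is still contained in $\gf_{q^i}$ and hence still guarantees $g(y)^{q^i} = g(y)$, so the reduction to $h(y) = y$ goes through without modification and the proof is complete.
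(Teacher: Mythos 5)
Your proof is correct and is essentially the paper's own argument: the paper omits the proof of this proposition, saying only that it follows the same pattern as Proposition \ref{akbary2011constructingwithlinearBi}, and that pattern is exactly your reduction — the hypothesis $g(\gf_{q^m})\subseteq \gf_{q^i}$ forces $h(y)=g(y)^{q^i}-g(y)+y=y$ on $S_\delta$, so the criterion of Theorem \ref{jingdianiff} collapses to $n\,g(y)=0$ for all $y\in S_\delta$, which is equivalent to $p\mid n$ by the nonvanishing assumption $g(S_\delta)\neq\{0\}$. Your side remark that the image of $g$ actually lies in $\gf_{q^m}\cap\gf_{q^i}=\gf_{q^\ell}$ is a valid and harmless refinement; the only property used is $g(y)^{q^i}=g(y)$, which holds either way.
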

			\begin{proof}
				Its proof follows in a similar manner with that in Proposition \ref{akbary2011constructingwithlinearBi}, and thus it is omitted.
			\end{proof}
				%

			\begin{Cor}
				\label{hsdcuasydgfcus}
				Assume integers $ m,i $ satisfy $ 1 \le i \le m-1 $.
				Let $H(x) $ be a nonzero polynomial over $\gf_{q^m}$ and $g(x)$ be either $g_1(x)=\tr_{q^m/q^i}(H(x))$ (with $H(S_{\delta}) \not\subset \ker(\tr_{q^m/q^i}) $) or $g_2(x)=H(x)^s$ (with $H(S_{\delta}) \ne \{0\}$), where positive integer $ s $ satisfies $ s(q^i-1) \equiv 0 \pmod{q^m-1} $.
				Then, for any $ \delta \in \gf_{q^m} $, $$f(x)=   x +   g(x^{q^i} -x +\delta  )       $$
				is an $n$-cycle permutation if and only if $p$ is a factor of $n$, where $ p $ is the characteristic of $\gf_{q^m} $.
			\end{Cor}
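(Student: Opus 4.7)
The plan is to reduce the corollary to Proposition \ref{Zconstruction} by verifying that in each of the two cases the polynomial $g$ satisfies the hypotheses there, namely that $g$ is nonzero, $g(\gf_{q^m}) \subseteq \gf_{q^i}$, and $g(S_\delta) \neq \{0\}$. Once these are established, Proposition \ref{Zconstruction} gives the equivalence with $p \mid n$ directly.

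For the case $g = g_1(x) = \tr_{q^m/q^i}(H(x))$, the image of the trace lies in $\gf_{q^i}$ by definition, so $g_1(\gf_{q^m}) \subseteq \gf_{q^i}$ is immediate. The extra hypothesis $H(S_\delta) \not\subset \ker(\tr_{q^m/q^i})$ provides some $y \in S_\delta$ with $g_1(y) \neq 0$, which simultaneously shows that $g_1$ is nonzero and that $g_1(S_\delta) \neq \{0\}$.

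For the case $g = g_2(x) = H(x)^s$, the crucial step is to show $g_2(\gf_{q^m}) \subseteq \gf_{q^i}$, for which I would use the assumption $s(q^i-1) \equiv 0 \pmod{q^m-1}$. For any $x \in \gf_{q^m}$ with $H(x) \neq 0$, one has $H(x)^{q^m - 1} = 1$, hence $H(x)^{s(q^i - 1)} = 1$, which rearranges to $g_2(x)^{q^i} = g_2(x)$; and when $H(x) = 0$ the value $g_2(x) = 0$ already lies in $\gf_{q^i}$. Thus $g_2$ is fixed by the Frobenius $x \mapsto x^{q^i}$, so $g_2(\gf_{q^m}) \subseteq \gf_{q^i}$. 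The hypothesis $H(S_\delta) \neq \{0\}$ then yields $g_2(S_\delta) \neq \{0\}$ and in particular that $g_2$ is nonzero.

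With both cases verified, Proposition \ref{Zconstruction} applies and the stated equivalence follows. I expect the main (minor) obstacle to be the verification that $g_2$ takes values in $\gf_{q^i}$; this is just a standard manipulation of the congruence on $s$, but it is the only nontrivial algebraic step. Since the corollary's conclusion is identical in form to Proposition \ref{Zconstruction}, no new cycle-counting argument is required, and we may simply omit the details of that portion and refer the reader back.
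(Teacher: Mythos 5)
Your proposal is correct and follows exactly the route the paper intends: Corollary \ref{hsdcuasydgfcus} is a direct specialization of Proposition \ref{Zconstruction}, and the paper (which leaves the proof implicit, just as in the analogous Corollary \ref{akbary2011constructingwithlinearBix}) relies on precisely the checks you carry out — that $g_1$, $g_2$ take values in $\gf_{q^i}$ and are nonzero on $S_\delta$, the congruence on $s$ giving $g_2(x)^{q^i}=g_2(x)$. Your write-up simply makes these verifications explicit, which is a faithful filling-in of the omitted details rather than a different argument.
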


			\begin{Exa}
				Let $q$ be a power of $3$, integers $ m,i $ satisfy $ 1 \le i \le m-1 $, and $s= 1+q^i+q^{2i}+\cdots+  q^{m-i} $.
				For any $ \delta \in \gf_{q^m} $,	$  x +   (x^{q^i} -x +\delta  )^s       $
				is a triple-cycle permutation of $\gf_{q^m}$, where $c \in \gf_{q^\ell}^* \setminus \{1\}$.
			\end{Exa}


			\section{Triple-cycle permutations of the form $x^rh(x^s)$}
			\label{fenyuan}
			
			In this section, we provide triple-cycle permutations of the form $x^rh(x^s)$.
	       First, we  recall a lemma and simply derive another one. 
			
			\begin{Lem}
				\label{mulcore}
				\cite[Theorem 1]{wuCharacterizationsConstructionsTriplecycle2020a}
				Let $q$ be a prime power and $f(x)= x^rh\left( x^s\right)  \in \gf_q[x]$, where $s \mid (q-1),\gcd(r,s)=1$.
				Assume that $g(x)=x^rh(x)^s$ is a polynomial on $\mu_\ell=\left\{    x\in{\gf}_{q}^*  \    |  \   x^\ell=1   \right\}$, where $  \ell = {(q-1)/s} $.
				Then, $f$ is a triple-cycle permutation over $\mathbb{F}_{q}$ if and only if
				\begin{enumerate}[(1)]
					\item $r^{3} \equiv 1 \bmod s$ and
					\item $\varphi(y)=y^{(r^3-1)/s}h(y)^{r^2}h\left( g\left( y\right) \right)  ^rh\left( g\left( g(y) \right) \right)  =1 $ for all $y \in \mu_{\ell}$.
				\end{enumerate}
			\end{Lem}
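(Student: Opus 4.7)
The plan is to prove the characterization by direct iteration of $f$, exploiting the semiconjugacy $\lambda \circ f = g \circ \lambda$ afforded by the map $\lambda(x) = x^s$. On $\mu_\ell$ the defining identity of $g$ gives the key algebraic relation $h(y)^s = g(y)\,y^{-r}$, which is used throughout. Setting $y = x^s$, one substitution yields
\begin{equation*}
f^{(2)}(x) = x^{r^2}\,h(y)^r\,h\bigl(g(y)\bigr),
\end{equation*}
and applying the identity once more with $g(y) \in \mu_\ell$ in place of $y$ gives the closed form
\begin{equation*}
f^{(3)}(x) = x^{r^3}\,h(y)^{r^2}\,h\bigl(g(y)\bigr)^r\,h\bigl(g(g(y))\bigr).
\end{equation*}

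With this formula the equation $f^{(3)}(x) = x$ is analyzed by cases. At $x = 0$ it holds trivially because $f(0) = 0$. For $x \in \gf_q^*$ it becomes
\begin{equation*}
x^{r^3 - 1} \;=\; h(y)^{-r^2}\,h\bigl(g(y)\bigr)^{-r}\,h\bigl(g(g(y))\bigr)^{-1},
\end{equation*}
whose right-hand side depends only on $y = x^s$. Hence any two preimages $x_1, x_2$ of the same $y$ under $x \mapsto x^s$ must satisfy $x_1^{r^3-1} = x_2^{r^3-1}$; writing $x_2 = \zeta x_1$ with $\zeta$ ranging over the $s$-th roots of unity forces $\zeta^{r^3-1} = 1$ for all such $\zeta$, which is equivalent to $s \mid (r^3 - 1)$, giving condition $(1)$. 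Once this holds, $x^{r^3-1} = y^{(r^3-1)/s}$, and the displayed equation collapses to $\varphi(y) = 1$ for every $y \in \mu_\ell$, which is condition $(2)$. The reverse implication is read off the same chain of equalities, and since $f^{(3)} = I$ automatically provides $f^{(2)}$ as the inverse of $f$, the permutation property of $f$ requires no separate argument.

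The main technical step is the iterated computation of $f^{(k)}$: the identity $h(y)^s = g(y)\,y^{-r}$ ensures $\lambda \circ f^{(k)} = g^{(k)} \circ \lambda$, so that the argument of each $h$ in the closed form for $f^{(3)}$ stays inside $\mu_\ell$ and condition $(2)$ is well-posed. Carefully tracking exponents through the two iterations is the only real subtlety; the subsequent case split $\gf_q^*$ versus $\{0\}$, and the translation between the multiplicative identity on $x^{r^3-1}$ and the congruence $r^3 \equiv 1 \pmod s$, are then routine.
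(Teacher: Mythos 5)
Your proof is correct, but note that there is no in-paper proof to compare it against: the paper states this result as Lemma 4.1 and cites it directly to Theorem 1 of Wu, Li and Wang (2020) without reproducing an argument. Your route is the standard one for results of this type, and it is also the same iteration technique this paper does use for its own Theorem 3.1 on $xh(\lambda(x))$: exploit the semiconjugacy $\lambda\circ f=g\circ\lambda$ with $\lambda(x)=x^s$, compute the closed form $f^{(3)}(x)=x^{r^3}h(y)^{r^2}h\bigl(g(y)\bigr)^{r}h\bigl(g(g(y))\bigr)$ with $y=x^s$, split into $x=0$ and $x\in\gf_q^*$, and extract the congruence $r^3\equiv 1\pmod{s}$ by comparing the two sides along a fiber of $x\mapsto x^s$, which is a coset of the (cyclic, order $s$) group of $s$-th roots of unity. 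Two points you use silently and should state: first, inverting $h$ in the display $x^{r^3-1}=h(y)^{-r^2}h\bigl(g(y)\bigr)^{-r}h\bigl(g(g(y))\bigr)^{-1}$ requires $h(y)\neq 0$ for $y\in\mu_\ell$, which does follow from the hypothesis that $g$ maps $\mu_\ell$ into $\mu_\ell$ (if $h(y)=0$ then $g(y)=0\notin\mu_\ell$); second, the conversion between ``for all $x\in\gf_q^*$'' and ``for all $y\in\mu_\ell$'' in both directions rests on the surjectivity of the $s$-power map from $\gf_q^*$ onto $\mu_\ell$, which holds because $\gf_q^*$ is cyclic of order $q-1$ and $\ell=(q-1)/s$. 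With these two remarks added, your argument is complete and matches what the cited source does.
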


			The lemma below is not hard to obtain.
			\begin{Lem}
				\label{single}
				Let $q$ be a prime power, $s \mid (q-1)$, $\gcd(r,s)=1$ and $r^{3} \equiv 1 \bmod s$.
				Assume that $h(x)\in \gf_q[x]$ such that $h(y)^s=ay^{v-r}$ holds for any $y \in \mu_\ell=\left\{    x\in{\gf}_{q}^*  \    |  \   x^\ell=1   \right\}$ , where $v^{3} \equiv 1 \bmod \ell$,  $a^{v^2+v+1}=1$ and $  \ell = {(q-1)/s} $.
				Then $f(x)= x^rh(x^s) $ is a triple-cycle permutation over $\mathbb{F}_{q}$  if and only if for any $y \in \mu_\ell$,
				$$ y^{(r^3-1)/s}h(y)^{r^2}h( ay^v )  ^rh\left(   a^{v+1}  y^{v^2}  \right)  =1 .$$
			\end{Lem}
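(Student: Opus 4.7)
The plan is to reduce this to Lemma \ref{mulcore} by a direct substitution. Since condition (1) of Lemma \ref{mulcore}, namely $r^3 \equiv 1 \bmod s$, is already assumed, the task is to rewrite condition (2) under the extra hypothesis $h(y)^s = ay^{v-r}$ for $y \in \mu_\ell$.

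First I would verify that $g(x) = x^r h(x)^s$ is actually a well-defined map from $\mu_\ell$ to $\mu_\ell$, as required for Lemma \ref{mulcore} to apply. For $y\in\mu_\ell$ one computes
\begin{equation*}
g(y) = y^r h(y)^s = y^r \cdot a y^{v-r} = a y^v,
\end{equation*}
so it remains to check $ay^v \in \mu_\ell$. Since $h(y)\in\gf_q^*$, Fermat's little theorem gives $h(y)^{q-1}=1$, i.e.\ $(h(y)^s)^\ell = 1$, hence $(ay^{v-r})^\ell = a^\ell = 1$; so $a\in\mu_\ell$ and $g(\mu_\ell)\subseteq\mu_\ell$ automatically.

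Next I would iterate. Using $ay^v\in\mu_\ell$ and the same identity $h(z)^s = az^{v-r}$ with $z=ay^v$,
\begin{equation*}
g(g(y)) = (ay^v)^r \, h(ay^v)^s = a^r y^{vr}\cdot a\,(ay^v)^{v-r} = a^{v+1}\, y^{v^2}.
\end{equation*}
Plugging these values of $g(y)$ and $g(g(y))$ into the expression $\varphi(y)=y^{(r^3-1)/s}h(y)^{r^2}h(g(y))^r h(g(g(y)))$ of Lemma \ref{mulcore} turns the criterion $\varphi(y)=1$ on $\mu_\ell$ into exactly
\begin{equation*}
y^{(r^3-1)/s}\, h(y)^{r^2}\, h(ay^v)^r\, h\!\left(a^{v+1}y^{v^2}\right) = 1,
\end{equation*}
which is the claimed condition. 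By Lemma \ref{mulcore}, this equivalence gives the desired characterization.

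There is no real obstacle here; the only subtlety worth flagging is the closure check $g(\mu_\ell)\subseteq\mu_\ell$, since Lemma \ref{mulcore} requires $g$ to be a map on $\mu_\ell$. The hypothesis $a^{v^2+v+1}=1$ together with $v^3\equiv 1 \bmod \ell$ is consistent with (and in fact forced by) requiring the third iterate $g^{(3)}$ to be the identity on $\mu_\ell$, so the two assumptions of the lemma are natural accompaniments of the substitution rather than additional constraints needed in the argument itself.
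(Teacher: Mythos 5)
Your proposal is correct and follows exactly the route the paper intends: the paper omits the proof (stating only that the lemma ``is not hard to obtain'' from Lemma~\ref{mulcore}), and your substitution $g(y)=y^rh(y)^s=ay^v$, $g(g(y))=a^{v+1}y^{v^2}$ into condition (2) of Lemma~\ref{mulcore} is precisely that intended derivation. Your extra care in checking $a\in\mu_\ell$ (hence $g(\mu_\ell)\subseteq\mu_\ell$) and in noting that $v^3\equiv 1 \bmod \ell$ and $a^{v^2+v+1}=1$ are exactly what force $g^{(3)}=I$ is a welcome addition rather than a deviation.
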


			Note that  if $f(x) \in \gf_q[x]$ is a triple-cycle permutation over $ \gf_{q^3}$, then so does $f(x)^{q^i}$ for $i\in \{0,1,2\}$, according to Lemma \ref{xianxingkuochong}.
			\begin{Prop}
				Let $ q =2^{3m}$, where $m$ is a positive integer.
				Assume that integer $k$ satisfies $ 7k \equiv 0 \pmod{q-1}   $ and $k \equiv 3 \pmod 7$.
				Then,
				$$f(x)=x \left(     1 + x^{k(q^2+q+1)} + x^{2k(q^2+q+1)}       \right)$$
				is a triple-cycle permutation over $\gf_{q^{3}}$.
			\end{Prop}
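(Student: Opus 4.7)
The approach is to apply Lemma \ref{mulcore} with $r=1$ and $s=q^2+q+1$, so that $\ell=(q^3-1)/s=q-1$ and $\mu_\ell=\gf_q^*$. With this choice, the hypotheses $\gcd(r,s)=1$ and $r^3\equiv 1\pmod s$ are automatic, and upon setting $h(y)=1+y^k+y^{2k}$ and $g(y)=y\,h(y)^s$, the criterion reduces to checking that $g$ maps $\gf_q^*$ into itself and that
\begin{equation*}
h(y)\,h(g(y))\,h(g^{(2)}(y))=1\quad\text{for every }y\in\gf_q^*.
\end{equation*}

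The key reduction exploits that $7k\equiv 0\pmod{q-1}$, so that for every $y\in\gf_q^*$ the element $\omega:=y^k$ satisfies $\omega^7=1$, i.e.\ $\omega\in\mu_7=\gf_8^*$; note that $\gf_8\subseteq\gf_q$ since $q=2^{3m}$. Writing $H(X)=1+X+X^2$, we then have $h(y)=H(\omega)$. A zero of $H$ on $\mu_7$ would force $\omega^3=1$, hence $\omega=1$ since $\gcd(3,7)=1$, contradicting $H(1)=1\ne 0$ in characteristic $2$; therefore $h(y)\in\gf_8^*\subseteq\gf_q^*$. Consequently $h(y)^s=h(y)^{q^2+q+1}=h(y)^3$, which gives $g(y)=y\,H(\omega)^3\in\gf_q^*$ and, using $k\equiv 3\pmod 7$ together with $H(\omega)^7=1$, $g(y)^k=\omega\,H(\omega)^{3k}=\omega\,H(\omega)^2$. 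Defining $\phi:\mu_7\to\mu_7$ by $\phi(\omega):=\omega\,H(\omega)^2$, we inductively obtain $g^{(i)}(y)^k=\phi^{(i)}(\omega)$ and $h(g^{(i)}(y))=H(\phi^{(i)}(\omega))$ for every $i\ge 0$, so the triple-cycle condition collapses to the purely $\gf_8$-level identity
\begin{equation*}
H(\omega)\,H(\phi(\omega))\,H(\phi^{(2)}(\omega))=1\qquad\text{for all }\omega\in\mu_7.
\end{equation*}

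To close the argument, we use the telescoping relation $\phi(\xi)/\xi=H(\xi)^2$, which upon iterating three times yields
\begin{equation*}
\phi^{(3)}(\omega)/\omega=\bigl[H(\omega)\,H(\phi(\omega))\,H(\phi^{(2)}(\omega))\bigr]^2.
\end{equation*}
Hence it suffices to prove $\phi^{(3)}=\mathrm{id}$ on $\mu_7$: the square of the desired product then equals $1$, and squaring is injective in characteristic $2$. The verification of $\phi^{(3)}=\mathrm{id}$ is a finite check in $\gf_8=\gf_2[\omega]$ with $\omega^3=\omega+1$, where one shows directly that $\phi$ fixes $1$ and acts on the remaining six elements as the two $3$-cycles $(\omega,\omega^4,\omega^2)$ and $(\omega^3,\omega^6,\omega^5)$. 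This small explicit computation inside $\gf_8$ is the main, and essentially only, concrete obstacle; all other steps are structural reductions that push the problem from $\gf_{q^3}$ down to $\gf_q^*$ and finally to the fixed subfield $\gf_8^*$.
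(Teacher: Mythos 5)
Your proof is correct, and it rests on the same foundation as the paper's: both apply Lemma \ref{mulcore} with $r=1$, $s=q^2+q+1$, $\mu_\ell=\gf_q^*$, the same $h(y)=1+y^k+y^{2k}$ and $g(y)=y\,h(y)^s$, and both exploit $7k\equiv 0\pmod{q-1}$ and $k\equiv 3\pmod 7$ to push the computation down into $\gf_8$. Where you part ways is in how the criterion is verified. The paper proceeds symbolically: it introduces $\sigma(x)=1+x^k+x^{3k}+x^{5k}+x^{6k}$ (so that $g(x)=x\sigma(x)$), derives identities such as $\sigma(x)^k=1+x^{2k}+x^{4k}$ and its $2k$-th, $3k$-th, $5k$-th, $6k$-th power analogues, and then expands $g^{(2)}(x)$, $g^{(3)}(x)$ and the product $h(x)\,h(g(x))\,h(g^{(2)}(x))$ monomial by monomial until everything cancels. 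You instead substitute $\omega=y^k\in\mu_7$, observe that the whole iteration descends to the single map $\phi(\omega)=\omega H(\omega)^2$ on the seven-element group $\mu_7$, and use the telescoping identity $\phi^{(3)}(\omega)/\omega=\bigl[H(\omega)\,H(\phi(\omega))\,H(\phi^{(2)}(\omega))\bigr]^2$ together with injectivity of squaring in characteristic $2$, reducing the problem to the finite orbit check $\phi^{(3)}=\mathrm{id}$ on $\gf_8^*$; I verified that check ($\phi$ fixes $1$ and acts as the two $3$-cycles you state), as well as your nonvanishing argument for $H$ on $\mu_7$ and the step $h(y)^{q^2+q+1}=h(y)^3$, so the side hypotheses of the lemma ($g$ mapping $\mu_\ell$ into itself in particular) are all in place. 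Your route buys two things: the long cancellations disappear entirely, and the telescoping identity makes explicit that the paper's two separate computations ($g^{(3)}(x)=x$ and $\varphi(x)=1$) are really a single fact, whereas the paper establishes them independently by brute force.
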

			\begin{proof}
				First, we acquire several equations for preparations. 
				Note $7 \mid (q-1)$ by $ q =2^{3m}$.
				For any $y \in \gf_8$, we have
				\begin{equation}\label{asfwedwed}
					\begin{aligned}
						\left( 1+y+ y^{3 } + y^{5 } +y^{6 }  \right) ^{3} =& \left(  1+y+ y^{3 } + y^{5 } +y^{6 }  \right)  \left(    1+y^2+ y^{3 } + y^{5 } +y^{6 }         \right) \\
						=&  1+y^{2} +y^{4}   ,\\
					\end{aligned}
				\end{equation}
				and
				\begin{equation}\label{asvasfawsdqwdqwdasd}
					\begin{aligned}
						\left(  1+y+ y^{3}\right)^{3} =&	\left(  1+y+ y^{3}\right)  \left(  1+y^2+ y^6  \right)    \\
						=&1+y+y^2+y^3+y^4.  \\
					\end{aligned}
				\end{equation}
				Clearly, for any $x \in \gf_{q}^*$, we have $x^k \in \gf_8^*$ by $ 7k \equiv 0 \pmod{q-1}   $.
				Let $\sigma(x)= 1 + x^{k} + x^{3 k}+ x^{5 k}+ x^{6 k} $.
				According to Eq. (\ref{asfwedwed}) and Eq. (\ref{asvasfawsdqwdqwdasd}) respectively, we obtain that for any $x \in \gf_{q}^*$,
				\begin{equation}\label{asvewefwe}
					\sigma(x)^{k} = 1+x^{2k}+x^{4k} 
				\end{equation}
				and
				\begin{equation}\label{svefwfqedw}
					\left(1+x^{3 k}+x^{6 k}\right)^k=  1 +x^k +x^{2 k}+x^{3 k} + x^{4 k} ,
				\end{equation}
				since $k \equiv 3 \pmod 7$.
				Then, by raising Eq. (\ref{asvewefwe}) to the power of $2,3,5$ and $6$ respectively, we acquire
				\begin{equation}\label{s4y5u64htgsf}
					\begin{aligned}
					\sigma(x)^{2k} =& 	\left( 1+x^{2k}+ x^{4k}  \right)^{2}       	= 1+x^{ k}+x^{4 k}, \\
					\end{aligned}
				\end{equation}
				\begin{equation}\label{afdwefasdfgv34}
					\begin{aligned}
						\sigma(x)^{3k} 	=&   \left(  1 +x^{2k} + x^{4k} \right)  \left(1 +x^{2k}  + x^{4k} \right)^{2}         \\
						=&      \left( 1+x^{2k} + x^{4k} \right)    \left(  1 + x^k+  x^{4 k} \right)   \\
						=& 1+ x^{2k} +x^{3k} + x^{5k}+    x^{6k}         ,\\
					\end{aligned}
				\end{equation}
				\begin{equation}\label{fwer2t43t3e2g}
					\begin{aligned}
						\sigma(x)^{5k} =&  \left( 1+x^{2k}+  x^{4k}  \right)^{4} \left( 1+x^{2k}  + x^{4k} \right)	  \\
						=&  1 +x^{k}  +x^{3k}+x^{5k} +  x^{6k} ,  \\
					\end{aligned}
				\end{equation}
				\begin{equation}\label{fsert4whtb}
					\begin{aligned}
						\sigma(x)^{6k} =&	\left(  x^{6k}   + x^{5k}   +x^{3k}    + x^{2k}+1\right)^{2}  \\
						=&  1  +x^{3k} +x^{4k} +x^{5k}+ x^{6k}.  \\
					\end{aligned}
				\end{equation}
				And, by raising Eq. (\ref{svefwfqedw}) to the power of $2$, we get
				\begin{equation}\label{awdaefwrweqf}
					\left( 1 +x^{3 k} +  x^{6 k}\right)^{2k}=1  +x^{ k}  +x^{2 k} +x^{4 k} + x^{6 k}.
				\end{equation}
				
				After the preparation above, we now prove the theorem by Corollary \ref{mulcore}.
				Let $h(x)= 1+x^k+x^{2k}$ and
				\begin{equation}\label{asdwefeqd}
					g(x)  = x  h(x)^{q^2+q+1}   = x  (1+x^k+x^{2k}  )^{q^2+q+1}  .
				\end{equation}
				Then, $f(x)$ can be written as $	x h\left( x^{q^2+q+1}  \right)$.
				To apply Corollary \ref{mulcore}, we will compute $g\left(g(x) \right) $ and verify $g\left( g\left(g(x) \right)\right) =x $ in the below, for any $x\in \gf_{q}^*$.
				After that, we verify $\varphi(x)=h(x)h\left( g\left( x\right) \right)h\left( g\left( g(x) \right) \right)  =1 $, for any $x \in \gf_q^*$.
				
				By expanding Eq. (\ref{asdwefeqd}), one can obtain for each $x \in \gf_{q}^*$,
				\begin{equation}
					\begin{aligned}
						g(x)  = x  \left( 1+x^{k}+x^{2k}  \right)^3=&	 x \left( 1+x^{k}+x^{2k}  \right)  \left( 1+x^{2k}+x^{4k}  \right)    \\
						=& x \left(  1 +  x^k+ x^{3 k}+ x^{5 k}+x^{6 k}\right)  .\\
					   	=& x \sigma(x) .\\
					\end{aligned}
				\end{equation}
				Thus, we have
				\begin{equation}\label{awdqwfedqwd}
					\begin{aligned}
						g\left(g(x) \right)=& x  \sigma(x) \left(  1 + x^k \sigma(x)^k  +  x^{3k} \sigma(x)^{3k}+ x^{5k} \sigma(x)^{5k} +     x^{6k} \sigma(x)^{6k}   \right) .\\
					\end{aligned}
				\end{equation}
				By plugging Eqs. (\ref{s4y5u64htgsf}), (\ref{afdwefasdfgv34}), (\ref{fwer2t43t3e2g}) and (\ref{fsert4whtb}) into Eq. (\ref{awdqwfedqwd}), one can arrive at
				\begin{equation}\label{ggx}
					g(g(x))  = x  \sigma(x)  \left(  1 + x^{k}+ x^{6k}  \right) = x \left( 1 +x^{3 k}+x^{6 k}\right).
				\end{equation}
				By plugging Eq. (\ref{ggx}) into $g(x)=x \left(  1+  x^k+ x^{3 k}+ x^{5 k}+x^{6 k}\right) $, one can obtain
				\begin{equation}\label{ahdbwquyquygdiuq}
					g(g(g(x)))  = x  \sigma(x)   \left( 1+   x^{3k} \sigma(x)^{3k} + x^{6k} \sigma(x)^{6k} \right)   .
				\end{equation}
				After plugging Eqs. (\ref{afdwefasdfgv34}) and (\ref{fsert4whtb}) into Eq. (\ref{ahdbwquyquygdiuq}), we have
				\begin{equation*}\label{}
					\begin{aligned}
						g(g(g(x)))  	=&	x  (  1+ x^k+ x^{3 k}+ x^{5 k}+x^{6 k} )  \left( 1+   x^{k}  +x^{4k}  \right)  \\
						=& x (  x^{2k}+ x^{4 k}+ x^{6 k}+ 1 + x^{k}       +         x^{5k}+ 1+ x^{2 k}+x^{3k}+x^{4k}     \\
						&  +          x^k+ x^{3 k}+ x^{5 k}+x^{6 k}+1         ) \\
						=&x,  \\
					\end{aligned}
				\end{equation*}
				for each $x\in \gf_{q}^*$.
				
				Finally, for each $x\in \gf_{q}^*$, we have
				\begin{equation}\label{asdasfwqfqegbhgdhne}
					\begin{aligned}
						\varphi(x)=& (1+x^k+x^{2k}  )  h\left( x \sigma(x) \right)  h\left(  x (1+ x^{3 k}+x^{6 k})     \right) \\
						=& (1+x^k+x^{2k}  )  \left(     1+     x^{k} \sigma(x)^{k} +  x^{2k} \sigma(x)^{2k}                  \right) \\
						&  \cdot   \left(    1+(x ( 1+  x^{3 k}+x^{6 k})  )^k+(x ( 1+  x^{3 k}+x^{6 k})  )^{2k}                        \right)  .
					\end{aligned}
				\end{equation}
				By plugging Eqs.  (\ref{svefwfqedw}), (\ref{s4y5u64htgsf}),  (\ref{afdwefasdfgv34}), (\ref{fwer2t43t3e2g}), (\ref{fsert4whtb}) and (\ref{awdaefwrweqf}) into Eq. (\ref{asdasfwqfqegbhgdhne}) and simplifying it, one can obtain
				\begin{equation}\label{sadfwefgwrgth45}
					\varphi(x)=(1+x^k+x^{2k}  )  \left(     1+     x^{k}+x^{5k}    +        x^{2 k}+x^{ 6k}             \right)  \left(    1 +x^{5k}  +x^{6 k}                 \right) .
				\end{equation}
				After expanding Eq. (\ref{sadfwefgwrgth45}), one will get
				\begin{equation}
					\begin{aligned}
						\varphi(x)	=&  \left(   1+x^k+x^{2k}  \right)   \left( 1+x^{2 k}+x^{3 k}+x^{5 k}+x^{6 k}\right) 	 \\
						=&  1+x^{2 k}+x^{3 k}+x^{5 k}+x^{6 k}   +   x^k+x^{3 k}+x^{4 k}+x^{6 k}+ 1  +          x^{2k}+x^{4 k}+x^{5 k}+1+x^{8 k}  \\
						=& 1 .\\
					\end{aligned}
				\end{equation}
				Thus, $f$ is a triple-cycle permutation over $\gf_{q^{3}}$, according to Corollary \ref{mulcore}.
			\end{proof}
			\begin{Exa}
				Let $q=2^6$.
				Then $k=45$ satisfies $ 7\times 45 \equiv 0 \pmod{63}   $ and $45 \equiv 3 \pmod 7$.
				Thus, 
				$f(x)=x \left(     1 + x^{45\times4161} + x^{90 \times4161}       \right)$
				is a triple-cycle permutation over $\gf_{2^{18}}$.
			\end{Exa}
			%

			\begin{Th}
				\label{aqwefvgbvfcqx}
				Let $ q $ be a prime power.
				Assume for any $x\in \mu_{q^2+q+1}$, $ h(x) ^{q-1}=1$.
				Then $f(x)=x^q h\left(x^{q-1}\right)$ is a triple-cycle permutation over $\gf_{q^{3}}$ if and only if $ h(x) h( x^q )   h\left(    x^{q^2}  \right)  =1$ holds for any $x \in \mu_{q^2+q+1}$.
			\end{Th}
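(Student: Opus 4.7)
The plan is to apply Lemma~\ref{mulcore} directly, but with the ambient field taken to be $\gf_{q^3}$ (so the ``$q$'' appearing in that lemma plays the role of ``$q^3$'' here). Writing $f(x)=x^{r}h(x^{s})$ with $r=q$ and $s=q-1$, one has $s\mid(q^3-1)$ with $\ell=(q^3-1)/s=q^2+q+1$, and $\gcd(r,s)=\gcd(q,q-1)=1$. Condition~(1) of Lemma~\ref{mulcore} is immediate, since $q\equiv 1\pmod{q-1}$ forces $r^3=q^3\equiv 1\pmod{s}$; moreover $(r^3-1)/s=q^2+q+1=\ell$.

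Next, I would use the hypothesis $h(x)^{q-1}=1$ on $\mu_{q^2+q+1}$ to simplify the auxiliary map $g(x)=x^{r}h(x)^{s}=x^{q}h(x)^{q-1}$. On $\mu_{\ell}$ this collapses to $g(x)=x^{q}$, so $g$ does stabilize $\mu_{\ell}$ (which is already closed under the $q$-th power) and one obtains the iterate $g(g(y))=y^{q^2}$. Therefore Lemma~\ref{mulcore} is applicable, and the triple-cycle criterion reduces to
\[
\varphi(y)=y^{q^2+q+1}\cdot h(y)^{q^2}\cdot h(y^{q})^{q}\cdot h(y^{q^2})=1
\]
for every $y\in\mu_{q^2+q+1}$.

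Finally, I would simplify $\varphi$ using the standing hypothesis. For $y\in\mu_{q^2+q+1}$ we have $y^{q^2+q+1}=1$. The factorization $q^2-1=(q-1)(q+1)$ together with $h(y)^{q-1}=1$ gives $h(y)^{q^2}=h(y)$; likewise, since $\mu_{\ell}$ is stable under $x\mapsto x^{q}$, $y^{q}\in\mu_{\ell}$ and so $h(y^{q})^{q-1}=1$, whence $h(y^{q})^{q}=h(y^{q})$. Substituting yields $\varphi(y)=h(y)\,h(y^{q})\,h(y^{q^2})$, so the condition $\varphi(y)=1$ on $\mu_{q^2+q+1}$ is precisely the stated one, completing both directions of the equivalence.

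The only point that requires attention in this argument is the shift of base field in the invocation of Lemma~\ref{mulcore}, and the collapse of the Frobenius-like exponents $q^2$ and $q$ on the $h$-values via $h^{q-1}=1$; beyond this bookkeeping there is no substantive obstacle.
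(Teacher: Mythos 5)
Your proposal is correct and follows essentially the same route as the paper: the paper's (omitted) proof invokes Lemma~\ref{single}, which is itself the specialization of Lemma~\ref{mulcore} with $a=1$, $v=r=q$, $s=q-1$ over the ambient field $\gf_{q^3}$, and this yields exactly your criterion $y^{q^2+q+1}h(y)^{q^2}h(y^q)^qh(y^{q^2})=1$. Your subsequent collapse of the exponents via $h^{q-1}=1$ on $\mu_{q^2+q+1}$ (and its stability under the $q$-th power map) is precisely the ``easy'' simplification the paper leaves to the reader.
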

			\begin{proof}
				The proof is easy by Lemma \ref{single}, and we omit it.
			\end{proof}
			
			\begin{Prop}
				\label{}
				Let $ q=2^{2m} $.
				Assume $h(x)=1+\alpha x^{\frac{q^2+q+1}{3}}   +  x^\frac{2q^2+2q+2}{3}  $, where $\alpha \in \gf_{q}$ satisfying $\alpha^3=1$.
				Then $f(x)=x^q h\left(x^{q-1}\right)$ is a triple-cycle permutation over $\gf_{q^{3}}$.
			\end{Prop}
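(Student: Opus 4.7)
The plan is to apply Theorem~\ref{aqwefvgbvfcqx} and verify its two hypotheses. The first step will be to observe that $q = 2^{2m}$ forces $q \equiv 1 \pmod 3$, whence $(q^2+q+1)/3$ is a positive integer and the group $\mu_3$ of cube roots of unity lies inside $\gf_4 \subseteq \gf_q$. For any $x \in \mu_{q^2+q+1}$, setting $y = x^{(q^2+q+1)/3}$ will give $y^3 = x^{q^2+q+1} = 1$, so $y \in \mu_3 \subseteq \gf_q$. Since $(2q^2+2q+2)/3 = 2\cdot(q^2+q+1)/3$, one then has $h(x) = 1 + \alpha y + y^2$ with $\alpha, y \in \gf_q$, so in particular $h(x) \in \gf_q$.

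Given this, the first hypothesis of Theorem~\ref{aqwefvgbvfcqx}, namely $h(x)^{q-1}=1$ for $x \in \mu_{q^2+q+1}$, will follow as soon as $h(x) \neq 0$. A two-case split on $y$ handles this: when $y = 1$ we find $h(x) = \alpha \neq 0$, while when $y$ is a primitive cube root of unity we use $y^2 = y + 1$ (characteristic $2$) to simplify $h(x) = (\alpha+1)y$, which is nonzero exactly when $\alpha \neq 1$. Thus the statement really requires $\alpha$ to be a primitive cube root of unity; the case $\alpha = 1$ would force $h$ to vanish on two-thirds of $\mu_{q^2+q+1}$ and should be excluded.

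For the triple-cycle condition $h(x)h(x^q)h(x^{q^2}) = 1$, the key observation is that $y \in \gf_q$ implies $(x^q)^{(q^2+q+1)/3} = y^q = y$, and similarly for $x^{q^2}$. Hence $h(x^q) = h(x^{q^2}) = h(x)$, and the condition will reduce to $h(x)^3 = 1$. If $y = 1$ this is just $\alpha^3 = 1$, which is given; if $y^2 + y + 1 = 0$, then $h(x)^3 = (\alpha+1)^3 y^3 = (\alpha+1)^3$, and expanding in characteristic $2$ gives $(\alpha+1)^3 = \alpha^3 + \alpha^2 + \alpha + 1 = \alpha^2 + \alpha$, which equals $1$ because $\alpha$, being a primitive cube root of unity, satisfies $\alpha^2 + \alpha + 1 = 0$.

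The main subtlety is not any deep computation but recognizing that the conclusion really needs $\alpha$ to be a primitive cube root of unity, and then exploiting the characteristic-$2$ identity $(\alpha+1)^3 = \alpha^2 + \alpha = 1$ that links $\alpha$ to its Galois conjugate $\alpha + 1$; after these observations, everything else is routine manipulation inside the subfield $\gf_4 \subseteq \gf_q$.
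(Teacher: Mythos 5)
Your proposal is correct and follows essentially the same route as the paper: both apply Theorem \ref{aqwefvgbvfcqx}, set $y = x^{(q^2+q+1)/3}$, note $y \in \mu_3 \subseteq \gf_4 \subseteq \gf_q$ so that $h(x^q)=h(x^{q^2})=h(x)$, and thereby reduce the triple-cycle condition to $h(x)^3=1$, checked inside $\gf_4$. Two differences are worth recording. First, a cosmetic one: you get the hypothesis $h(x)^{q-1}=1$ directly from $h(x)\in\gf_q^*$, whereas the paper derives it from $h(x)^3=1$ together with $3 \mid (q-1)$; both are fine. Second, and more substantively, your case split on $y$ exposes a genuine flaw that the paper's computation glosses over: the paper expands $(1+\alpha y+y^2)^3=(\alpha^2+\alpha+1)y^2+(\alpha^2+\alpha+1)y+\alpha^3$ and then asserts in Eq.~(\ref{sgafwefeqdd}) that this equals $1$, which holds only when $\alpha^2+\alpha+1=0$ or $y=1$. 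For $\alpha=1$ (allowed by the stated hypothesis $\alpha^3=1$) and $y$ a primitive cube root of unity, one has $h(x)=1+y+y^2=0$, so $f$ sends every non-cube of $\gf_{q^3}^*$ to $0$ and is not even a permutation. So your insistence that $\alpha$ be a \emph{primitive} cube root of unity (equivalently $\alpha^2+\alpha+1=0$) is not pedantry but a necessary correction to the proposition, and your identities $h(x)=(\alpha+1)y$ and $(\alpha+1)^3=\alpha^2+\alpha=1$ give the cleanest way to see why the corrected statement holds.
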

			\begin{proof}
				Clearly $3 \mid (q^2+q+1)$ by $ q=2^{2m} $.
				For each  $x\in \mu_{q^2+q+1}$, let $ y= x^{\frac{q^2+q+1}{3}}$.
				We have $y^3=1$ and $y^q=y$.
				
				Below, we prove that $h(x)^3 =1$ for any $x\in \mu_{q^2+q+1}$.
				After expanding and simplifying $( 1+\alpha y   +  y^2 )^3$, we obtain
				\begin{equation}\label{iugbuygvut}
					( 1+\alpha y   +  y^2 )^3= \alpha ^2 y    +\alpha ^3       + \alpha ^2 y^2+ \alpha  y^2+ \alpha  y+1 + y   + y^2+1.
				\end{equation}
				Eq. (\ref{iugbuygvut}) equals to
				\begin{equation}\label{sgafwefeqdd}
					\left(\alpha ^2+\alpha +1\right) y^2+\left(\alpha ^2+\alpha+1\right) y+\alpha ^3=1.
				\end{equation}
				
				Then, we have $h(x)^{q-1} =1$ due to $ 3 \mid (q-1)$.
				Note $\alpha \in \gf_{q}$.
				Thus $h(x^q)=h(x)^q=h(x)$, which leads to
				\begin{equation}\label{avafawwqd}
					h(x) h( x^q )   h\left(    x^{q^2}  \right) =     h(x)^3=1 .
				\end{equation}
				Thus, $f$ is a triple-cycle permutation over $\gf_{q^{3}}$, according to Theorem \ref{aqwefvgbvfcqx}.
			\end{proof}
			

			\begin{Th}
				\label{jieguo3aadafqeqgf}
				Let $ q $ be a prime power, $\phi(x) \in \mathbb{F}_{q^{2}}[x]$ and $ h(x)= \phi(x)+x^{(v-1)q}   \phi(x)^q + \psi(x)    $, where $v^3 \equiv 1 \pmod{q+1}$ and  $\psi(x)$ satisfying $\psi(x)^{q-1}=x^{v-1}$.
				Then $f(x)=x h\left(x^{q-1}\right)$ is a triple-cycle permutation over $\gf_{q^{2}}$ if and only if 	$ h(x)h( x^v ) h\left(    x^{v^2}  \right)  =1 $ holds for any $x \in \mu_{q+1}$.
			\end{Th}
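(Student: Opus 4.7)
The strategy is to reduce Theorem~\ref{jieguo3aadafqeqgf} to Lemma~\ref{single} applied over the field $\gf_{q^2}$ with the parameter assignment $r=1$, $s=q-1$, and hence $\ell=(q^2-1)/(q-1)=q+1$, so that $\mu_\ell=\mu_{q+1}$. With $r=1$ the hypotheses $\gcd(r,s)=1$ and $r^3\equiv 1\pmod{s}$ are automatic, the factor $y^{(r^3-1)/s}$ in Lemma~\ref{single} is $1$, and the conclusion there collapses to exactly the identity $h(y)h(y^v)h(y^{v^2})=1$ on $\mu_{q+1}$ provided we can take $a=1$. Thus the whole theorem will follow once we verify the one nontrivial hypothesis of Lemma~\ref{single}, namely
\[
 h(y)^{q-1}=y^{v-1}\qquad\text{for every }y\in\mu_{q+1},
\]
which corresponds to $a=1$ (and trivially $a^{v^2+v+1}=1$).

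The main step is therefore this identity, and it is a direct Frobenius calculation that I would organize as follows. For $y\in\mu_{q+1}$ one has $y^{q+1}=1$, whence $y^{q}=y^{-1}$ and $y^{q^2}=y$; because $\phi\in\gf_{q^2}[x]$ its coefficients are fixed by $c\mapsto c^{q^2}$, so $\phi(y)^{q^2}=\phi(y^{q^2})=\phi(y)$; and the assumption $\psi(y)^{q-1}=y^{v-1}$ gives $\psi(y)^q=y^{v-1}\psi(y)$. Raising $h(y)=\phi(y)+y^{(v-1)q}\phi(y)^q+\psi(y)$ to the $q$-th power term-by-term and applying $y^{(v-1)q^2}=y^{v-1}$ then yields
\[
 h(y)^q \;=\; \phi(y)^q \;+\; y^{v-1}\phi(y) \;+\; y^{v-1}\psi(y).
\]
On the other hand, multiplying the defining expression of $h(y)$ by $y^{v-1}$ and using $y^{(v-1)(q+1)}=1$ to collapse $y^{v-1}\cdot y^{(v-1)q}$ to $1$ gives exactly the same right-hand side. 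Hence $h(y)^q=y^{v-1}h(y)$, and dividing by $h(y)$ (which is nonzero on $\mu_{q+1}$, since otherwise $f$ could not be a permutation and both sides of the claimed equivalence would fail trivially) produces the required $h(y)^{q-1}=y^{v-1}$.

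With that identity in hand, I invoke Lemma~\ref{single} with $a=1$ and $v-r=v-1$: the lemma asserts that $f(x)=xh(x^{q-1})$ is a triple-cycle permutation of $\gf_{q^2}$ if and only if $h(y)\,h(y^v)\,h(y^{v^2})=1$ for every $y\in\mu_{q+1}$, which is precisely the stated criterion. The only genuine obstacle is the Frobenius bookkeeping in the computation of $h(y)^q$; everything else is a matter of matching parameters to the hypotheses of Lemma~\ref{single}. The authors may simply omit the proof and refer to Lemma~\ref{single}, mirroring the pattern used for Theorem~\ref{aqwefvgbvfcqx}.
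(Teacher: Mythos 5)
Your proposal is correct and takes essentially the same route as the paper: the paper likewise first disposes of the degenerate case $h(x_0)=0$ for some $x_0\in\mu_{q+1}$ (where both sides of the equivalence fail), then establishes $h(x)^{q-1}=x^{v-1}$ on $\mu_{q+1}$ by the same Frobenius computation, and finally applies Lemma~\ref{single} with $a=1$, $r=1$, $s=q-1$, so that $g(x)=xh(x)^{q-1}=x^{v}$ is a triple-cycle permutation of $\mu_{q+1}$ and the criterion collapses to $h(x)h(x^{v})h(x^{v^{2}})=1$. The only difference is cosmetic: the paper does spell out this short argument rather than omitting it as you speculated.
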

			\begin{proof}
				If there exists an $ x_0 \in \mu_{q+1} $ such that $ h(x_0) = 0 $.
				Then $ h(x_0)^3  =0 \ne 1 $.
				Furthermore, $f(x_0)=x_0 h\left(x_0^{q-1}\right) =0$, thus $ f $ is not a triple-cycle permutation.
				
				If for any $ x\in \mu_{q+1} $, $ h(x) \ne 0 $.
				Then we have
				$$
				h(x)^{q-1}=\frac{ \phi(x)^q   +x^{(v-1)}   \phi(x) + \psi(x)^q    }{ \phi(x)+x^{(v-1)q}   \phi(x)^q + \psi(x)    } =    x^{v-1}.
				$$
				This lead to $g(x)=x h(x)^{q-1}=x^v$, which is a triple-cycle permutation over $\mu_{q+1}$.
				Then by plugging $a=1, r=1,v=1, s=q-1$ and $g(x)= x$ into the condition in Lemma \ref{single}, we have  $f(x)$ is a triple-cycle permutation over $\mathbb{F}_{q^{2}}$ if and only if
				$ h(x)h( x^v ) h\left(    x^{v^2}  \right)  =1 $ for any $ x \in \mu_{q+1} $.
			\end{proof}
			\begin{Prop}
				\label{aasdsaasd}
				Let $ q=2^{12k-6} $ and $ - 6 t + 12 t^2 - 8 t^3 \equiv 0 \pmod{q+1}$, where $k$ is a positive integer.
				Assume $ m $ is an integer such that
				\begin{subequations}\label{YY}
\begin{align} 
		-3 m + 6 m t - 4 m t^2  \equiv &  0   \pmod{q+1}    \label{yong1}, \\
		-m - m t +  t +  t^2  \equiv  & 0  \pmod{q+1}    \label{tiaojian2}   \text{ and} \\
		13m-13t	  \equiv &  0  \pmod{q+1}    \label{tiaojian3}
	\end{align}
\end{subequations}
are all established.			Then $f(x)=x h\left(x^{q-1}\right)$ is a triple-cycle permutation over $\gf_{q^2}$, where $$ h(x)= x^m+x^{mq-2tq}  + x^t    .$$
			\end{Prop}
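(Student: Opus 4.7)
My plan is to invoke Theorem \ref{jieguo3aadafqeqgf} with the choices $v = 1 - 2t$, $\phi(x) = x^m$, and $\psi(x) = x^t$. First I would check the hypotheses of that theorem: the identity $(1 - 2t)^3 - 1 = -6t + 12t^2 - 8t^3$ together with the given divisibility immediately yields $v^3 \equiv 1 \pmod{q+1}$; on $\mu_{q+1}$ one has $x^{q-1} = x^{-2}$, so $\psi(x)^{q-1} = x^{-2t} = x^{v-1}$; and a direct substitution confirms $\phi(x) + x^{(v-1)q}\phi(x)^q + \psi(x) = x^m + x^{mq-2tq} + x^t = h(x)$. The problem therefore reduces to establishing $h(x)h(x^v)h(x^{v^2}) = 1$ for every $x \in \mu_{q+1}$.

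Next I would put $h$ into a symmetric form on $\mu_{q+1}$. Since $x^q = x^{-1}$ there, one rewrites $h(x) = x^m + x^{2t-m} + x^t = x^t A(x)$, where $a := m - t$ and $A(u) := u^a + u^{-a} + 1$. The triple product factors as
\[
h(x)h(x^v)h(x^{v^2}) \;=\; x^{t(1+v+v^2)}\,A(x)\,A(x^v)\,A(x^{v^2}).
\]
Because $q+1$ is odd, halving the divisibility hypothesis gives $t(v^2+v+1) = 3t - 6t^2 + 4t^3 \equiv 0 \pmod{q+1}$, which kills the prefactor. Condition (a), which rewrites as $m(v^2+v+1) \equiv 0 \pmod{q+1}$, then combines with the halved hypothesis to yield $a(v^2+v+1) \equiv 0 \pmod{q+1}$. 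Setting $\omega := x^a$, this exactly says that the three elements $\omega, \omega^v, \omega^{v^2}$ have product $1$.

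The crucial step is to observe that $\omega$ in fact lies in $\mu_{13}$. Here I would use that $2$ has multiplicative order $12$ modulo $13$ and $2^6 \equiv -1 \pmod{13}$, so $q = 2^{12k-6} \equiv -1 \pmod{13}$ and thus $13 \mid q+1$. Condition (c), $13a \equiv 0 \pmod{q+1}$, then forces $\omega^{13} = 1$. If $\omega = 1$, the identity $A(\omega^{v^i}) = 1+1+1 = 1$ makes the product trivial. Otherwise $\omega$ has order $13$; combining $\omega^{1+v+v^2} = 1$ (hence $13 \mid 1+v+v^2$) with $v^3 \equiv 1 \pmod{13}$ pins $v \bmod 13$ to $3$ or $9$, so the orbit $\{\omega, \omega^v, \omega^{v^2}\}$ equals $\{\omega, \omega^3, \omega^9\}$ either way. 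Condition (b), $a(1+t) \equiv 0 \pmod{q+1}$, is an auxiliary constraint in the hypothesis that further restricts $a$ and rules out degenerate interactions between the various annihilators, so that the orbit argument above really applies.

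The hard part will be the final cyclotomic identity
\[
(\omega + \omega^{-1} + 1)(\omega^3 + \omega^{-3} + 1)(\omega^9 + \omega^{-9} + 1) \;=\; 1
\]
for $\omega$ any primitive 13th root of unity in characteristic $2$. I would handle it by using the characteristic-$2$ identity $u + u^{-1} + 1 = (u^2 + u + 1)/u$ together with $\omega^{1+3+9} = \omega^{13} = 1$ to rewrite the left-hand side as
\[
(\omega^2 + \omega + 1)(\omega^6 + \omega^3 + 1)(\omega^{18} + \omega^9 + 1),
\]
then reduce modulo $\Phi_{13}(\omega) = 1 + \omega + \cdots + \omega^{12} = 0$. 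Expanding the first two factors and applying $\Phi_{13}$ collapses them to $\omega^9(1 + \omega + \omega^2 + \omega^3)$; multiplying by $\omega^9 + \omega^5 + 1$ and invoking $\Phi_{13}$ once more leaves exactly $\omega^9 \cdot \omega^4 = \omega^{13} = 1$. Once this cyclotomic identity is settled, Theorem \ref{jieguo3aadafqeqgf} concludes that $f$ is a triple-cycle permutation over $\gf_{q^2}$.
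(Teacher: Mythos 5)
Your proof is correct, and its core is genuinely different from the paper's. Both arguments make the same reduction: with $v=1-2t$, $\phi(x)=x^m$, $\psi(x)=x^t$, Theorem~\ref{jieguo3aadafqeqgf} reduces everything to the identity $h(x)h(x^v)h(x^{v^2})=1$ on $\mu_{q+1}$ (the paper uses this reduction implicitly, as the proposition follows that theorem). Where you diverge is in proving this identity. The paper derives a battery of auxiliary congruences (its~(\ref{yong1})--(\ref{yong4}) and~(\ref{ap1})--(\ref{ap4})) and then expands the product into twenty-seven monomials that cancel in pairs in characteristic $2$. You instead factor $h(x)=x^tA(x)$ with $A(u)=u^{a}+u^{-a}+1$ and $a=m-t$, kill the prefactor $x^{t(1+v+v^2)}$ using the halved main hypothesis, and use condition~(\ref{tiaojian3}) together with $13\mid q+1$ to localize everything at $\omega=x^a\in\mu_{13}$; the whole question then collapses to the single cyclotomic identity $(\omega+\omega^{-1}+1)(\omega^3+\omega^{-3}+1)(\omega^9+\omega^{-9}+1)=1$ for primitive $13$th roots of unity in characteristic $2$, and your $\Phi_{13}$ computation of it is right (the product of the first two factors is $\omega^5+\omega^6+\omega^7+\omega^8$, and multiplying by the third gives $\sum_{i=1}^{12}\omega^i=1$). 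Your route is shorter, explains structurally why the prime $13$ and the exponent $12k-6$ appear, and in fact buys something the paper's proof does not: it never uses condition~(\ref{tiaojian2}), so it proves the proposition with that hypothesis deleted, whereas the paper's cancellation bookkeeping does rely on~(\ref{tiaojian2}). This makes your one vague sentence---that condition~(\ref{tiaojian2}) ``rules out degenerate interactions \dots\ so that the orbit argument really applies''---the only blemish: it is unsubstantiated and, as your own argument demonstrates, unnecessary. You should simply note that~(\ref{tiaojian2}) is not invoked anywhere in your proof, rather than gesture at a role it does not play.
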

			\begin{proof}
				In this proof, first we will derive some useful congruences from Congruences (\ref{yong1}), (\ref{tiaojian2}), (\ref{tiaojian3}).
				Then, we will further handle these useful congruences (Congruences (\ref{yong1}), (\ref{yong2}), (\ref{yong3}) and (\ref{yong4})) to obtain Congruences (\ref{ap1}), (\ref{ap2}) , (\ref{ap3})  and (\ref{ap4}) that can be directly used to prove $f$ being triple-cycle.
				
				We have $13 \mid (q+1)$ by $  q=2^{12k-6} $.
				By simplifying $ 2 \times $ Congruence (\ref{tiaojian2}) $ - $ Congruence (\ref{yong1}), we have
				\begin{equation}
					m + 2 t - 8 m t + 2 t^2 + 4 m t^2   \equiv  0   \pmod{q+1}  \label{yong2}.
				\end{equation}
				Then, by simplifying $ 8 \times $ Congruence (\ref{tiaojian2}) $ + $ Congruence (\ref{tiaojian3}), we have
				\begin{equation}
					5 t-5 m+8 m t-8 t^{2}   \equiv  0  \pmod{q+1}  .  \label{linshi}
				\end{equation}
				After simplifying $ (-2) \times $ Congruence (\ref{yong1}) $ + $ Congruence (\ref{linshi}), one can obtain
				\begin{equation}
					m + 5 t - 4 m t - 8 t^2 + 8 m t^2 \equiv   0  \pmod{q+1}\label{yong3}   .
				\end{equation}
				By simplifying $ - $ Congruence (\ref{yong3}) $ - $ Congruence (\ref{tiaojian2}), we get
				\begin{equation}
					m - 7 t + 6 m t + 6 t^2 - 8 m t^2	  \equiv   0  \pmod{q+1} \label{yong4} .
				\end{equation}
				
				After obtaining Congruences (\ref{yong1}), (\ref{yong2}), (\ref{yong3}) and (\ref{yong4}), we will further handle them below.
				Assume $v=1-2t$.
				Then, $v^2+v+1=3 - 6 t + 4 t^2$ and according to Congruence (\ref{yong1}), we arrive at

				\begin{equation} \label{ap1}
					\left\{
					\begin{aligned}
					&	m(v^2+v+1) \equiv v(v^2+v+1) \equiv   -3 m + 6 m t - 4 m t^2 \equiv    0   \pmod{q+1}   , \\    
					&	-m v^2 - m + t v + v^2 - v  \equiv   m v + t v^2 + t   \pmod{q+1}    , \\
					&	-m v - m + t v^2 - v^2 + 1	  \equiv   m v^2 + t v + t        \pmod{q+1} , \\
					&	m v^2+m v+t     \equiv     -m+t v^2+t v-v+1      \pmod{q+1} , \\
					&	-m v^2 - m v + t + v - 1   \equiv     m + t v^2 + t v        \pmod{q+1} , \\
					&	m v + m + t v^2   \equiv  -m v^2 + t v + t + v^2 - 1        \pmod{q+1} \text{ and} \\ 
					&	-m v + t v^2 + t - v^2 + v  		  \equiv     m v^2 + m + t v        \pmod{q+1} . \\
					\end{aligned}
					\right.
				\end{equation}
				According to Congruence (\ref{yong2}), one can get
				\begin{equation}\label{ap2}
					-m v + m + t v^2 - v^2 + v \equiv m v^2 + m v - m - v + 1.
				\end{equation}
				According to Congruence (\ref{yong3}), we obtain
				\begin{equation}\label{ap3}
					m v^{2}-m v+m-v^{2}+v\equiv -m v^{2}+m v+t+v^{2}-1.
				\end{equation}
				According to Congruence (\ref{yong4}), we have
				\begin{equation}\label{ap4}
					-m v^2+m v+m+v^2-1   \equiv   m v^2-m+t v-v+1  .
				\end{equation}

				Finally, we expand $	h(x)h( x^v ) h\left(    x^{v^2}  \right) $ for any $ x \in \mu_{q+1} $ and get
				\begin{equation*}
					\begin{aligned}
						&  \left( x^m+x^{m q+(v-1)q}  + x^t \right)  \left( x^{m v}+x^{(m q+(v-1)q)v}  + x^{tv} \right) \left( x^{m v^2}+x^{(m q+(v-1)q)v^2}  + x^{tv^2} \right)    \\
						=&x^{-m v^2-m v+t+v-1}+x^{-m v^2+m v+t+v^2-1}+x^{-m v^2+m+t v+v^2-1}+x^{-m v^2+t v+t+v^2-1}\\
						&+x^{-m v^2-m+t v+v^2-v}+x^{m v^2+m v+t}+x^{m v^2+m+t v}+x^{m v^2+t v+t}+x^{m	v^2-m+t v-v+1}\\
						&+x^{m v^2-m v+t-v^2+v}+x^{m v+m+t v^2}+x^{m v+t v^2+t} +x^{m v-m+t v^2-v+1}+x^{m+t v^2+t v}\\
						&+x^{-m+t v^2+t v-v+1}+x^{-m v-m+t v^2-v^2+1}+x^{-m v+m+tv^2-v^2+v}+x^{-m v+t v^2+t-v^2+v} \\
						&+x^{-m v^2-m v-m}+x^{-m v^2-m v+m+v-1}+x^{-m v^2+m v+m+v^2-1}+x^{-m v^2+m v-m+v^2-v} \\
						&+x^{m v^2+m v+m}+x^{m v^2+m v-m-v+1}+x^{m v^2-av-m-v^2+1}+x^{m v^2-m v+m-v^2+v}+x^{t v^2+t v+t} \\
						=& 1, \\
					\end{aligned}
				\end{equation*}
				where the last equation holds by Congruences (\ref{ap1}), (\ref{ap2}) , (\ref{ap3})  and (\ref{ap4}).
				Thus $ f $ is a triple-cycle permutation. 
			\end{proof}
			\begin{Exa}
				Let $q=2^6, t=25,m=5$ in Proposition \ref{aasdsaasd}.
				Then Congruences (\ref{yong1}), (\ref{tiaojian2}) and (\ref{tiaojian3}) are all satisfied.
				Then, according to Proposition \ref{aasdsaasd}, $f(x)=x h\left(x^{q-1}\right)$ is a triple-cycle permutation over $\gf_{2^{12}}$, where $ h(x)= x^5+x^{45}  + x^{25}    .$
				This is also verified by Magma.
			\end{Exa}

			
			In the end of this paper, we provide an explicit triple-cycle permutation of the form $x^t + c\tr_{q^m/q}(x^s)$.
			%
			\begin{Prop}
				Let $q=2^{2 k}$, where $k$ is a positive integer. Put $\theta \in \mathbb{F}_{q}$ satisfying $\theta^{3}=1, \theta \neq 1$.
				Then, the compositional inverse of
				$$
				f(x)=x+\theta \operatorname{Tr}_{q^{3} / q}\left(x^{\frac{q^2+q}{2}}\right)
				$$
				is 	$		f^{-1}(x)=x +        \theta^2 \tr_{q^3/q}(x^\frac{q^2+q}{2}).		$
				Furthermore, $f$ is a triple-cycle permutation over $\mathbb{F}_{q^{3}}$.
			\end{Prop}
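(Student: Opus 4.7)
The plan is to establish the identity $f^{(3)} = I$ directly; since this is equivalent to $f \circ f \circ f = I$, it will simultaneously show that $f$ is a permutation, is a triple-cycle permutation, and that $f^{-1}$ coincides with $f^{(2)}$, whose closed form will drop out of the computation. Throughout, write $T(x) := \tr_{q^{3}/q}(x^{(q^2+q)/2}) \in \mathbb{F}_q$, so that $f(x) = x + \theta T(x)$, and note that $\theta T(x) \in \mathbb{F}_q$.

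The central lemma is the identity
\[
T(x+a) = T(x) + a \qquad \text{for every } a \in \mathbb{F}_q \text{ and } x \in \mathbb{F}_{q^{3}}.
\]
To prove it I would first expand $(x+a)^{q^2+q} = (x^{q^2}+a)(x^q+a) = x^{q^2+q} + a x^{q^2} + a x^q + a^2$ using $a^q = a$, and then take the square root (well-defined in characteristic $2$) to obtain
\[
(x+a)^{(q^2+q)/2} = x^{(q^2+q)/2} + a^{1/2} x^{q^2/2} + a^{1/2} x^{q/2} + a.
\]
After applying $\tr_{q^{3}/q}$ the claim reduces to showing that $\tr_{q^{3}/q}(x^{q/2})$ and $\tr_{q^{3}/q}(x^{q^2/2})$ coincide, so that the two off-diagonal terms annihilate each other in characteristic $2$. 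For $x \in \mathbb{F}_{q^{3}}$ the equality $x^{q^3/2} = x^{1/2}$ (both square to $x$) gives
\[
\tr_{q^{3}/q}(x^{q/2}) = x^{q/2} + x^{q^2/2} + x^{1/2} = \tr_{q^{3}/q}(x^{q^2/2}),
\]
while $\tr_{q^{3}/q}(a) = 3a = a$ in characteristic $2$ supplies the remaining constant.

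With the lemma in place, $\theta^{3} = 1$ together with $\theta \neq 1$ gives $1 + \theta + \theta^{2} = 0$, and iterating produces
\[
f^{(2)}(x) = f(x) + \theta\, T(f(x)) = x + \theta T(x) + \theta\bigl(T(x) + \theta T(x)\bigr) = x + \theta^{2} T(x),
\]
using $2\theta T(x) = 0$. Applying the lemma once more to $a = \theta^{2} T(x) \in \mathbb{F}_q$,
\[
f^{(3)}(x) = f^{(2)}(x) + \theta\, T(f^{(2)}(x)) = x + \theta^{2} T(x) + \theta T(x) + \theta^{3} T(x) = x + (1 + \theta + \theta^{2}) T(x) = x.
\]
Hence $f^{(3)} = I$, which simultaneously shows that $f$ is a bijection, is a triple-cycle permutation, and that $f^{-1} = f^{(2)}$ takes the claimed form $x + \theta^{2}\,\tr_{q^{3}/q}(x^{(q^2+q)/2})$.

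The main obstacle is the additive identity $T(x+a) = T(x) + a$. Its proof rests on two characteristic-$2$ features: the $1/2$-th power is a field automorphism, so the expansion of $(x+a)^{(q^2+q)/2}$ remains manageable, and the two cross traces $\tr_{q^{3}/q}(x^{q/2})$ and $\tr_{q^{3}/q}(x^{q^2/2})$ happen to be equal (via $x^{q^3/2} = x^{1/2}$) and therefore cancel when added. Once this lemma is secured, the triple-cycle property and the explicit compositional inverse follow from a short manipulation using $1 + \theta + \theta^{2} = 0$.
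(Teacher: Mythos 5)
Your proof is correct and follows essentially the same route as the paper: both expand $(x+a)^{(q^2+q)/2}$ multiplicatively via Frobenius twists in characteristic $2$, both rely on the cancellation $\tr_{q^3/q}\bigl(x^{q/2}\bigr)=\tr_{q^3/q}\bigl(x^{q^2/2}\bigr)$, and both close the argument by iterating with $1+\theta+\theta^2=0$. Your lemma $T(x+a)=T(x)+a$ for $a\in\mathbb{F}_q$ is just a cleaner, slightly more general packaging of the paper's key identity $\tr_{q^3/q}\bigl(f(x)^{(q^2+q)/2}\bigr)=(1+\theta)\,\tr_{q^3/q}\bigl(x^{(q^2+q)/2}\bigr)$.
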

			\begin{proof}
				Clearly $ \theta+\theta^2=1$ and $ \theta^{q/2}=\theta^2$.
				Then, after expanding $\left(   x + \theta \tr_{q^3/q}(x^\frac{q^2+q}{2})  \right) ^\frac{q^2+q}{2}$ and simplifying it, we have
				\begin{equation}\label{zz1}
					\begin{aligned}
						\left(   x + \theta \tr_{q^3/q}(x^\frac{q^2+q}{2}) \right) ^\frac{q^2+q}{2} =&  \left(   x^{q^2/2 } +  \theta^{q^2/2} \tr_{q^3/q}(x^\frac{q^2+q}{2})^{q^2/2}  \right)        \left(   x^{q/2 } +  \theta^{q/2} \tr_{q^3/q}(x^\frac{q^2+q}{2})^{q/2}  \right)       \\
						=& x^{\frac{q^2+q}{2} }  +      x^{q^2/2 }     \theta^{2} \tr_{q^3/q}(x^\frac{q^2+q}{2})^{q/2} +   x^{q/2 }   \theta^{2} \tr_{q^3/q}(x^\frac{q^2+q}{2})^{q/2}      \\
						&+  \theta \tr_{q^3/q}(x^\frac{q^2+q}{2}). \\
					\end{aligned}
				\end{equation}
It is clear that
			\begin{equation*}\label{zero}
				\tr_{q^3/q}(x^{q^2/2} + x^{q/2 }    )    =0  .
			\end{equation*}
Thus, 
		\begin{equation}\label{zz1kuo}
		\begin{aligned}			
			 \tr_{q^3/q} \left(     \left(   x + \theta \tr_{q^3/q}(x^\frac{q^2+q}{2})  \right)^\frac{q^2+q}{2}      \right)  =   & 	 \tr_{q^3/q} \left(    x^{\frac{q^2+q}{2} }       +  \theta \tr_{q^3/q}(x^\frac{q^2+q}{2})     \right) \\
			 =   & 	 \tr_{q^3/q} (    x^{\frac{q^2+q}{2} }   )     +        \theta \tr_{q^3/q}(x^\frac{q^2+q}{2}) . \\
		\end{aligned}		
		\end{equation}
	
				We consider $f(f(x))$ for any $x\in \gf_{q^{3}}$.
				After plugging Eq. (\ref{zz1kuo}) into 
				$$f(f(x))=x + \theta \tr_{q^3/q}(x^\frac{q^2+q}{2})  + \theta \tr_{q^3/q}\left( \left( x + \theta \tr_{q^3/q}(x^\frac{q^2+q}{2})\right)^\frac{q^2+q}{2}\right) , $$
				we have 	
					\begin{equation}\label{iyvbsaaxiuhyb}
f(f(x))=x +        \theta^2 \tr_{q^3/q}(x^\frac{q^2+q}{2}).
						\end{equation}
					
				Then, we simplify $f(f(f(x)))$ for any $x\in \gf_{q^{3}}$ below.
				By plugging Eq. (\ref{zz1kuo}) into
\begin{equation*}\label{ferghfgf}		
f(f(f(x)))=  x+\theta \operatorname{Tr}_{q^{3} / q}(x^{\frac{q^2+q}{2}})      +        \theta^2 \tr_{q^3/q}\left( \left( x+\theta \operatorname{Tr}_{q^{3} / q}(x^{\frac{q^2+q}{2}})  \right)^\frac{q^2+q}{2}   \right) ,
\end{equation*}		
one can obtain 
\begin{equation*}\label{fergasdqhfgf}		
			\begin{aligned}		
	f(f(f(x)))= & x+\theta \tr_{q^{3} / q}(x^{\frac{q^2+q}{2}})      +        \theta^2  \left(   \tr_{q^3/q} (    x^{\frac{q^2+q}{2} }   )     +        \theta \tr_{q^3/q}(x^\frac{q^2+q}{2}) \right)     \\
	=&x+\theta \tr_{q^{3} / q}(x^{\frac{q^2+q}{2}})      +        \theta^2   \tr_{q^3/q} (    x^{\frac{q^2+q}{2} }   )     +         \tr_{q^3/q}(x^\frac{q^2+q}{2})   \\
			=&x. \\
	\end{aligned}		
\end{equation*}		
Thus, $		f^{-1}(x)=x +   \theta^2 \tr_{q^3/q}(x^\frac{q^2+q}{2}) 	$, and $f$ is a triple-cycle permutation over $\mathbb{F}_{q^{3}}$.
\end{proof}


		\end{document}